\DeclareRobustCommand{\R}{\mathbb{R}}
\DeclareRobustCommand{\N}{\mathbb{N}}
\DeclareRobustCommand{\tp}{\mathsmaller T}
\let\originalleft\left
\let\originalright\right
\DeclareRobustCommand{\left}{\mathopen{}\mathclose\bgroup\originalleft}
\DeclareRobustCommand{\right}{\aftergroup\egroup\originalright}
\def\clap#1{\hbox to 0pt{\hss#1\hss}}
\def\mathclapinternal#1#2{\clap{$\mathsurround=0pt#1{#2}$}}
\def\mathclap{\mathpalette\mathclapinternal}
\DeclareRobustCommand{\abs}[1]{\left\lvert #1\right\rvert}
\DeclareRobustCommand{\norm}[1]{\left\lVert #1\right\rVert}
\DeclareRobustCommand{\set}[1]{\left\{ #1\right\}}
\DeclareRobustCommand{\setn}[1]{\{ #1\}}
\let\p=\paren
\let\pb=\parenb
\let\sp=\sqparen
\let\smat=\sqmatrix
\newcommand{\reply}[1]{{\color{black} #1}}
\begin{document}
% Title portion. Note the short title for running heads
\title{%Towards a Holistic Management of Wireless Control Systems\\\note{
  Holistic Cyber-Physical Management for Dependable Wireless Control Systems%
}
\author{Yehan Ma}
\author{Dolvara Gunatilaka}
\author{Bo Li}
\author{Humberto Gonzalez}
\author{Chenyang Lu}
%delete by Yehan
%\orcid{1234-5678-9012-3456}
\affiliation{%
  \institution{Washington University in St. Louis}
  \city{St. Louis}
  \state{MO}
  \postcode{63130}
  \country{USA}}

%\affiliation{%
%  \institution{Washington University in St. Louis}
%  \department{Computer Science and Engineering}
%  \city{St. Louis}
%  \state{MO}
%  \postcode{63130}
%  \country{USA}
%  }

%\affiliation{%
%  \institution{Washington University in St. Louis}
%  \department{Electrical and Systems Engineering}
%  \city{St. Louis}
%  \state{MO}
%  \postcode{63130}
%  \country{USA}
%}

\begin{abstract}
Wireless sensor-actuator networks (WSANs) are gaining momentum in industrial process automation as a communication infrastructure for lowering deployment and maintenance costs.  In traditional wireless control systems the plant controller and the network manager operate in isolation, which ignore the significant influence of network reliability on plant control performance.  To enhance the dependability of industrial wireless control, we propose a holistic cyber-physical management framework that employs run-time coordination between the plant control and network management.  Our design includes a holistic controller that generates actuation signals to physical plants and reconfigures the WSAN to maintain desired control performance while saving wireless resources.  As a concrete example of holistic control, we design a holistic manager that dynamically reconfigures the number of transmissions in the WSAN based on online observations of physical and cyber variables.  We have implemented the holistic management framework in the Wireless Cyber-Physical Simulator (WCPS).  A systematic case study has been presented based on two 5-state plants sharing a 16-node WSAN.  Simulation results show that the holistic management design has significantly enhanced the resilience of the system against both wireless interferences and physical disturbances, while effectively reducing the number of wireless transmissions.
\end{abstract}

%
% The code below should be generated by the tool at
% http://dl.acm.org/ccs.cfm
% Please copy and paste the code instead of the example below.

% To by decided--Yehan
\begin{CCSXML}
<ccs2012>
<concept>
<concept_id>10003033.10003106.10003112.10003238</concept_id>
<concept_desc>Networks~Sensor networks</concept_desc>
<concept_significance>500</concept_significance>
</concept>
<concept>
<concept_id>10010520.10010553.10003238</concept_id>
<concept_desc>Computer systems organization~Sensor networks</concept_desc>
<concept_significance>500</concept_significance>
</concept>
</ccs2012>
\end{CCSXML}

\ccsdesc[500]{Networks~Sensor networks}
\ccsdesc[500]{Computer systems organization~Sensor networks}

%
% End generated code
%

\keywords{%
  Cyber-Physical System,
  Wireless Control,
  Wireless Sensor-Actuator Network,
  Model Predictive Control%
}

\thanks{
  % This work is supported by the National Science Foundation,
  % under grant CNS-0435060, grant CCR-0325197 and grant EN-CS-0329609.
  Author's addresses: Y.~Ma, D.~Gunatilaka, B.~Li, {and} C.~Lu, Computer Science and Engineering Department, Washington University in St.~Louis; H.~Gonzalez, Electrical and Systems Engineering Department, Washington University in St.~Louis.
}

\maketitle

% The default list of authors is too long for headers}
\renewcommand{\shortauthors}{Y. Ma et al.}

\section{Introduction}
\label{sec:introduction}

With the adoption of industrial wireless standards such as WirelessHART~\cite{WirelessHART_standard} and ISA100~\cite{ISA_standard}, wireless sensor-actuator networks (WSANs) are being deployed in process industries world wide.
However, existing WSAN in process industries are usually used for \emph{monitoring} applications.
There remain significant challenges in supporting feedback control systems over WSAN due to concerns about \emph{dependability} of wireless control systems (WCS).  
A \emph{wireless control system} (WCS) employs a WSAN as the communication infrastructure for one or more feedback control loops, where the sensors, controllers and actuators communicate over the WSAN.
Despite considerable efforts to enhance the reliability of industrial WSAN, data loss is inevitable in open and hash operating environments, which may lead to severe degradation of control performance, or even system instabilities.
A dependable WCS therefore must maintain system stability and acceptable control performance under both physical disturbance and wireless interference.
A WCS is particularly vulnerable when significant data loss coincides with the plant experiencing a poor physical state. 
The challenge to utilize the WSAN for feedback control prevents the process industries from exploiting the full potential of wireless technologies, forcing plants to maintain extensive wired infrastructure despite the existence of the WSAN.  
Therefore, it is critical to develop WCS that are dependable under challenging cyber and physical conditions. 

Although the control system performance is heavily influenced by WSAN reliability, traditionally the physical plant and the network are controlled \emph{separately} at run time. 
The physical plant is controlled by a controller designed based on certain assumptions about the communication network.
However, the unpredictable wireless conditions of a WSAN mean that the wireless network design goals cannot be guaranteed, leading to unsafe physical plant operations.
On the other hand, the network is managed to reduce data loss in a best effort fashion, without any knowledge of the current requirements of the control system.
Notably, the required level of network reliability depends on the physical states of the plant.
When the physical plant is in an unsafe state, it requires highly reliable communication.
Conversely, when the physical plant is in a safe steady state, it is more tolerant to data loss which may in turn allow the network to save network resource.

\begin{figure*}[t]
 	\centering
  \includegraphics[width=0.45\textwidth]{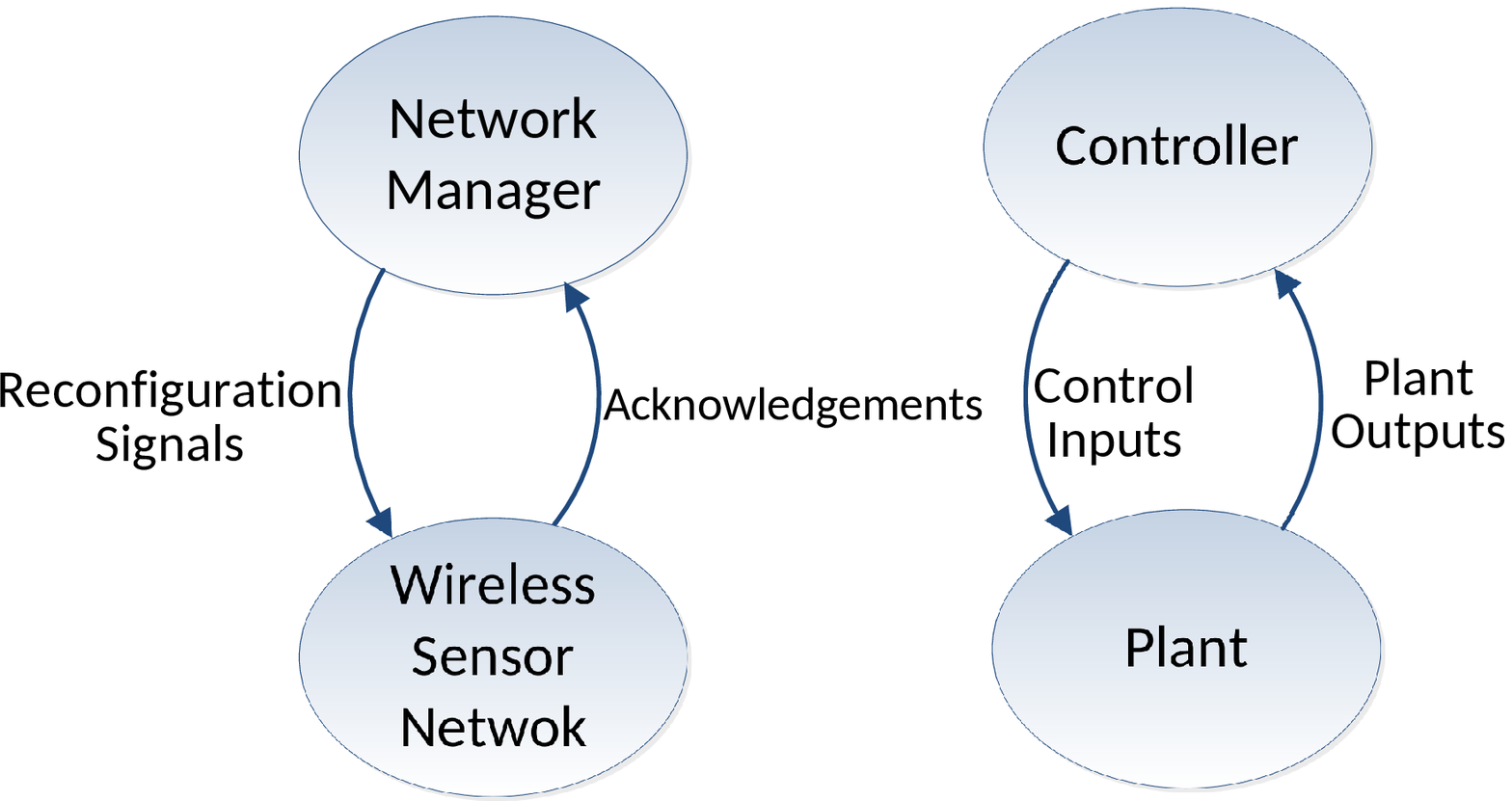}%
  \hspace{2em}%
  \includegraphics[width=0.45\textwidth]{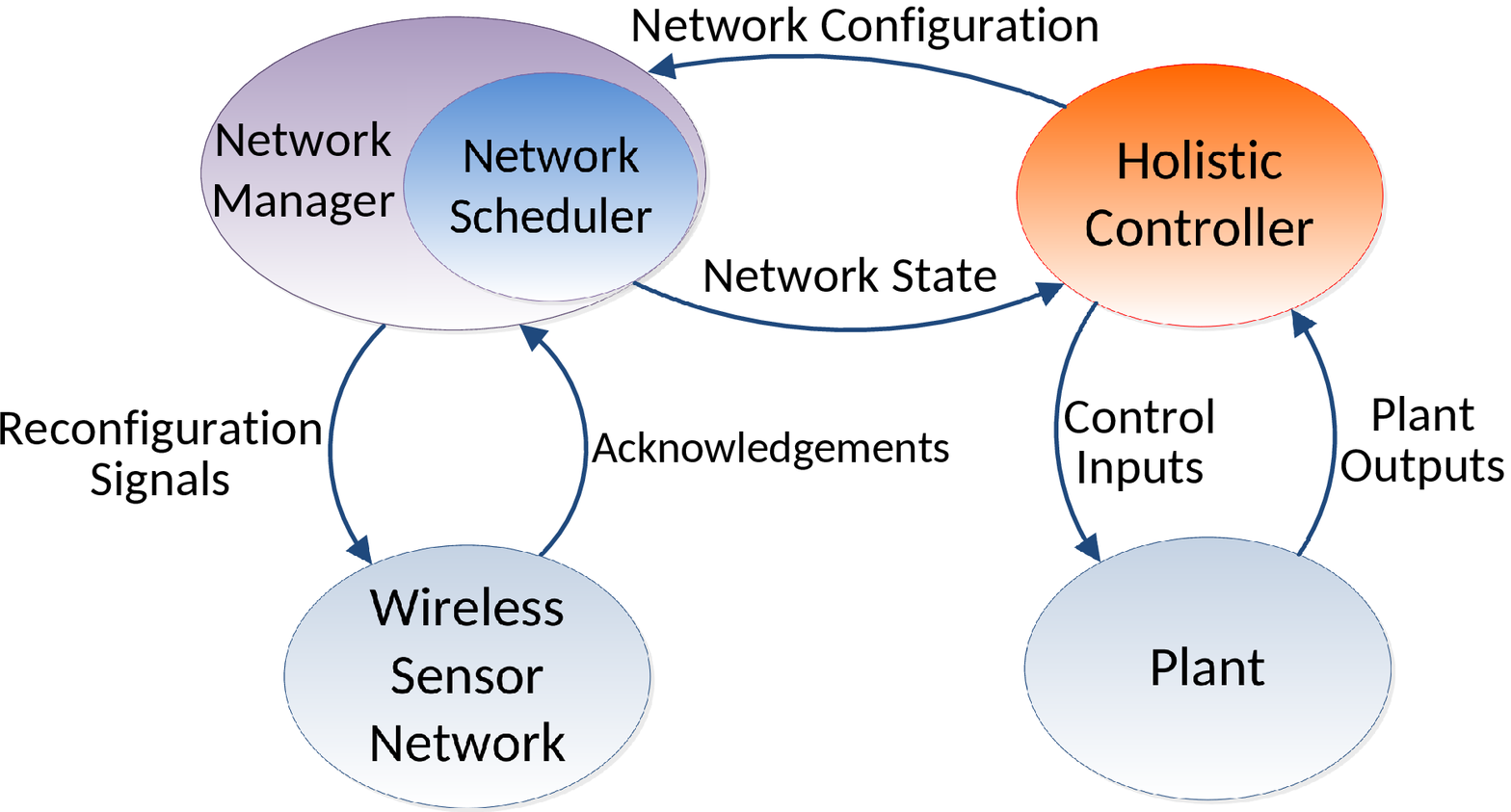}%
 	\caption{Traditional design (left) vs.\ holistic co-design (right) of wireless process control systems.}
 	\label{fig:holisticcontrol}
\end{figure*}

Building on this insight, we propose a \emph{holistic cyber-physical management} framework where the wireless network and the physical controller operate in a closed-loop fashion.
In this holistic management approach, the controller is endowed with the capability of modifying network configurations in addition to the physical plant itself, as shown in Fig.~\ref{fig:holisticcontrol}.
The controller that provides control commands to both the physical plant and the WSAN is regarded as the \emph{holistic controller}.  By coordinating the control and network management at run time, the holistic management approach can enhance the dependability of the WCS under both physical disturbance and wireless interference.  It can also improve the network efficiency when the plant state allows data loss.  

The contributions of this work are four-fold:
\begin{enumerate}
\item We propose the holistic management framework to enhance the dependability of WCS through coordinated control and network management;
\item We design a holistic controller based on a model predictive control scheme that simultaneously controls the physical plant while adjusting network configurations based on the state of the physical plant;
\item We present an example holistic management design that employs a network adaptation algorithm that dynamically adjust the number of retransmissions used to send certain actuation commands based on the physical plant conditions.
\item  We demonstrate through simulations that our holistic management scheme can enhance system dependability under both sensor disturbance and wireless interference, while avoiding allocating unnecessary retransmissions when not needed for the purpose of control.
\end{enumerate}

The rest of the paper is organized as follows:
Section~\ref{sec:related} describes our results in the context of related work.
Section~\ref{sec:wcs} presents the system architecture of our wireless networked control system.
Section~\ref{sec:controllerdesign} presents the design of our holistic controller and its adaptation algorithm.
Section~\ref{sec:wsn} presents an implementable solution for a run-time WSAN reconfiguration.
Section~\ref{sec:results} presents our experimental results.

\section{Related Work}
\label{sec:related}
%\note{Balance descriptions of stochastic properties, also do this in related works.}\\
%\note{Cite Redundant data transmission in control/estimation over lossy networks}\\
%\note{Cite On stochastic stability of packetized predictive control of non-linear systems over erasure channels~\cite{Pajic2012, ljevsnjanin2014packetized}}
%\note{~\cite{Branicky2002}}

Networked control systems are some of the best examples in the area of cyber-physical systems, and therefore have been extensively studied in the past decade.
As such, extensions of the traditional Kalman filter have been proposed for state estimation based on intermittent observations~\cite{Sinopoli2004, witrant2007predictive, irwin2010co}, \reply{enabling the use of lossy networks and delayed information in feedback control applications}.
New sampling methods, such as event-based and self-triggered control, have been developed to reduce communication traffic over wireless networks~\cite{Tabuada2007, Araujo2011}.
%\note{Assume that the network will have performance guarantees (no cyber attacks).}
%However, these results often overlook the natural loop that exists between network and control supervisors.
\reply{These results have addressed important problems related to cyber-physical co-design, and have validated the intuition that more network resources are needed when the physical states are away from a stable equilibrium point, while few network resources are needed when the physical states are close to a stable steady-state.
However, scheduling the number of packets that a network must transmit to control a system is only one side of the problem, since networks in general, and wireless networks in particular, can choose among many configurations (e.g., number of retransmissions, scheduling and routing) to deliver those packets~\cite{lureal, o2013ginseng, munir2010addressing, pottner2014constructing, han2011reliable}.
In other words, instead of abstracting the network as a transparent mechanism to transfer control information when needed, networked controllers should bidirectionally interact with the network manager as the dynamics of the physical system evolve.} That is, the control system and the network should be managed in a \emph{dynamic} and \emph{holistic} manner.

%\reply{Although maximizing packet delivery ratio with deadline constraints offers theoretically good control conditions~\cite{demirel2014modular, demirel2015wait}, in practice these efforts quickly reach a point where their impact in the performance of the physical plant is marginal, meanwhile they demand high efforts of the WSAN.
%This behavior is particularly present in the control of open-loop asymptotically stable plants, where even no network communication will result in a safe, albeit not optimal, performance.}

Dynamic stability results exist under several network configurations and communication conditions, including studies of delayed packet delivery and non-independent packet losses~\cite{montestruque2004stability, christofides2005control, Pajic2012, ljevsnjanin2014packetized}.
%\reply{However, most of the proofs are based on an assumption that network link failures can be regarded as Bernoulli distribution and some of them only work for single path, single hop, or single channel network control system.
\reply{However, stability guarantees for networked control systems typically come in the form of minimal requirements that a network must guarantee, such as bounds in data loss or in latency, thus disregarding cooperative approaches to adapt network conditions to physical plant performance and vice versa.
Even network protocols designed with control systems in mind, such as WirelessHART~\cite{WirelessHART_standard},
can only guarantee a specific level of reliability and performance under certain assumptions, thus potentially violating the sufficient conditions established in those theoretical results, especially when there are cyber or physical disturbances, or even malicious attacks.
Indeed, in the case of industrial wireless environments, these disturbances take the form of cross-protocol interference, physical obstacles, power failures, extreme weather, or sensor failures, among many others.}
% Therefore, the network link failure is hard to model in realistic.
% The network packet delivery ratio can be lower than $30\%$ due to cyber attacks, and models of plants will become inaccurate because of the physical attacks.
% \reply{Therefore, this paper offers an implementable network reconfiguration mechanism for wireless industrial process control, a mechanism that adapts to both wireless interferences and physical disturbances.}

The impact of retransmissions on control performance was studied in wireless control design. Previous works~\cite{demirel2014modular, demirel2015wait} showed that maximizing packet delivery ratio with deadline constraints offers theoretically good control conditions and explored the tradeoff between reliability and latency in control design. While those works focused on control design, our holistic management framework can dynamically change network configurations based on the states of the physical systems at run time.

From a networking perspective, several groups have worked on tailoring wireless network protocols for control systems applications.
For example, a cooperative MAC method was proposed to maintain control performance under unbounded delay, burst of packet loss, and ambient wireless traffic~\cite{ulusoy2011wireless}.
Bernardini and Bemporad have proposed a communication strategy between sensors and the controller that minimizes the data exchange over the wireless channel~\cite{bernardini2008energy}.
Also, several groups have considered specific scheduling and control schemes for closed-loop systems with stability guarantees~\cite{KTH-codesign, li_iccps13, Bai12:crosslayer, Schenato2007, Franceschelli2008, koutsoukos2008passivity}. Our effort is complementary to those works in that we aim to develop a holistic cyber-physical management framework for wireless control systems instead of developing new network protocols. 

In our previous studies~\cite{li2015incorporating, li2016wireless}, we have experimentally investigated the effect of harsh network conditions on control loops.
We have also developed specific routing and scheduling protocols to mitigate their effect.
In this paper we build upon our previous results, developing a holistic management scheme that instead of asking the network for a fixed minimum set of performance conditions, updates both control algorithm and network configurations, resulting in a robust and safe physical execution and an efficient network information flow.
As shown in Section.~\ref{sec:results}, our holistic management scheme increases the resilience of the closed-loop system, even in the presence of high wireless background noise and sensor malfunctions.
\section{A Wireless Control System}
\label{sec:wcs}

\reply{Since the working conditions of industrial plants are always too harsh for the controllers to operate, and since one industrial PC is supposed to control multiple loops, controllers are always located far away from the plants in wireless industrial process control deployments.}
Therefore, standard industrial wireless control systems use multi-hop networks, such as \reply{ISA~\cite{ISA_standard}}, WirelessHART~\cite{WirelessHART_standard}, and \reply{ZigBee}~\cite{Zigbee_standard}
to deliver information from a collection of sensors to a remote controller, and then back from the controller to the actuators in the plant.
In this paper, we adopt this general architecture, which we upgrade by using two modules to mitigate the impact of information loss in the wireless network: a state observer and an actuation signal buffer.
Below, we explain in detail how each of the components in our architecture interact to close the control feedback loop.

\subsection{Physical Plant and Controller}
\begin{figure}[tp]
	\centering
	\includegraphics[width=0.8\columnwidth]{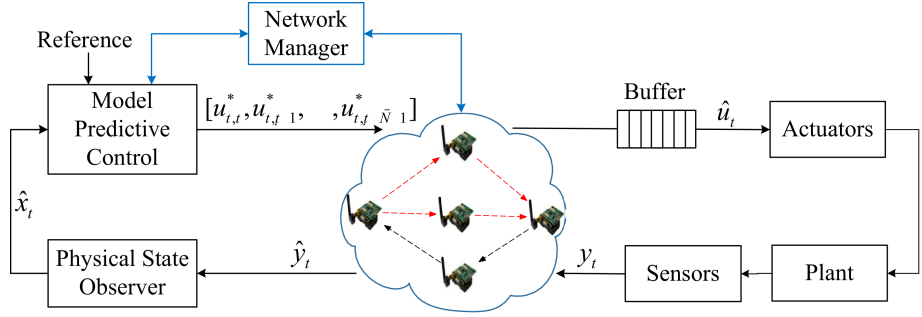}
	\caption{System Architecture.}
	\label{fig:wsn}
\end{figure}

Fig.~\ref{fig:wsn} shows the closed-loop system architecture that we consider in this paper, which closely follows to the architecture we considered in~\cite{li2016wireless}.
We assume that the plant is a linear time-invariant system, governed by the following difference equation:
\begin{equation}
  \label{eq:diffeq}
  x_{t+1} = A\, x_t + B\, u_t, \quad y_t = C\, x_t, \quad t \in \N,
\end{equation}
where $t$ is the time index, $x_t \in \R^n$ is the state vector, $u_t \in \R^m$ is input vector, $y_t \in \R^p$ is the output vector,
$A \in \R^{n \times n}$, $B \in \R^{n \times m}$, and $C \in \R^{p \times n}$.
We assume that the pair $(A,B)$ is controllable and that the pair $(A,C)$ is observable.

Sensor measurements $y_t$ are transmitted to a remote controller via the wireless network.
At time $t$, the wireless network delivers a vector $\hat{y}_t$, which is either equal to $y_{t-1}$ or is a \emph{no-data} signal if the packet is unable to be delivered within the time sampling deadline.
The vector $\hat{y}_t$ is delivered to an intermittent observation Kalman filter~\cite{Sinopoli2004}, which produces an estimated state vector $\hat{x}_t$ regardless of whether the wireless network was capable of delivering the sensing information.

At the core of the remote controller is a model predictive control scheme that at each time $t$ solves the following optimal control problem:
\begin{equation}
  \label{eq:mpc}
  \begin{aligned}
    V(\hat{x}_t) \!= \hspace{-7pt}\min_{\set{u_j}_{j=t}^{t+N-1}} &\sum_{j=t}^{\mathclap{t+N-1}} \p{x_j^\tp\, Q\, x_j + u_j^\tp\, R\, u_j} + x_{t+N}^\tp\, S\, x_{t+N},\\
    \text{subject to:}\quad
    &x_t = \hat{x}_t,\\
    &x_{j+1} = A\, x_j + B\, u_j,\\
    &x_j \in {\mathcal X},\ u_j \in {\mathcal U},\ j \in \set{t,\dotsc,t+N\!-\!1},\\
    &x_{t+N} \in {\mathcal X}_f,
  \end{aligned}
\end{equation}
where $N \geq 0$ is the \emph{time horizon}, $Q, S \in \R^{n \times n}$ are positive semi-definite, $R \in \R^{m \times m}$ is positive definite, ${\mathcal U} \subset \R^m$ is the input constraint set, and ${\mathcal X}, {\mathcal X}_f \subset \R^n$ are the state safety and final constraint sets, respectively.
We denote by $\setn{u_{t,j}^*}_{j=t}^{t+N-1}$ the optimal input signal at time $t$, which is the minimizer associated with value $V(\hat{x}_t)$ in~\eqref{eq:mpc}.
Note that the controller in~\eqref{eq:mpc} can also be used to control the system in~\eqref{eq:diffeq} around any state reference $\bar{x} \in {\mathcal X}$, satisfying:
\begin{equation}
  \label{eq:state_ref}
(A-I)\, \bar{x} + B\, \bar{u} = 0,\ \text{for some}\ \bar{u} \in {\mathcal U}.
\end{equation}
We assume that $\bar{x} \in {\mathcal X}$ and $\bar{u} \in {\mathcal U}$.

The MPC scheme in~\eqref{eq:mpc} sends the resulting sequence of optimal inputs, $\setn{u_{t,j}^*}_{j=t}^{t+N-1}$, over the wireless network.
Whenever the sequence successfully traverses the wireless network, it overwrites the old sequence stored in a buffer that periodically feeds the plant actuators.
Hence, if the packet containing the input sequence is successfully delivered at time $t$, then the actuator will apply the vector $u_{t,t}^*$.
If, instead, the packet is dropped, then the actuator will apply the vector $u_{t-1,t}^*$, and a similar procedure is repeated if consecutive actuation packets are dropped.
Thus, the buffer allows us to feed the actuator with an optimal (albeit potentially obsolete) input, even if $N$ packets are consecutively dropped, as explained in~\cite{li2016wireless}.

It is worth noting that while the state observer provides a robust and theoretically sound protection against loss of sensing information, the buffers delivering samples to the actuators are implemented following a heuristic approach.
Indeed, one would expect that if the MPC scheme is properly tuned and there are no external disturbances, then any two consecutive actuation signals, say $u_{t,t+1}^*$ and $u_{t+1,t+1}^*$, will not be very different.
The limit case occurs when the plant reaches steady-state and no external disturbances are applied to the plant, where the MPC scheme computes $N$ identical samples every time, i.e. $u_{t,j}^* = \bar{u}$ for each $j$.
Thus, in that case, our buffer heuristics allows us to withstand the loss of $N$ consecutive actuation packets without a control performance loss.
This situation is exemplified in both simulations in Section.~\ref{sec:results}.

\subsection{Wireless Network and Manager}
%\note{Explain why we use WirelessHART, TDMA protocol}
\label{sec:WNM}
We adopt a WirelessHART~\cite{WirelessHART_standard} architecture for our WSAN design, \reply{which is designed for applications in industrial wireless process automation by selecting a set of specific network features that enable timely and highly reliable communication.}
A WirelessHART network is a wireless multi-hop mesh network consisting of a number of battery-powered field devices connected to a gateway through access points.
The network is managed by a centralized network manager, usually collocated with the gateway.
The network manager collects topology information from the field devices, computes routes and transmission schedules, and disseminates the routing information and schedules among field devices.

\reply{WirelessHART adopts the IEEE 802.15.4 physical layer designed for low cost and low data rate communication.} Transmissions are scheduled based on a multi-channel Time Division Multiple Access (TDMA) protocol \reply{which can provide a deterministic and collision-free communication compared to CSMA/CA, and which works perfectly with periodic communication}.
Each time slot is $10\, \text{ms}$, which can accommodate a transmission and its acknowledgement.
For transmissions between pairs of nodes, a time slot can either be \emph{dedicated} or \emph{shared}.
In a \emph{dedicated} slot, only one sender is allowed to transmit.
In a \emph{shared} slot, more than one sender competes for one transmission opportunity.
WirelessHART networks operate on a $2.4\, \text{GHz}$ ISM band, and can use up to 16 channels, as defined in IEEE~802.15.4 physical layer standard.
Also, WirelessHART networks  adopt \emph{channel hopping} for channel diversity, periodically changing the communication channel according to a predetermined schedule.
At the network layer, the WirelessHART protocol supports two types of routings, namely, \emph{source} and \emph{graph} routing.
\emph{Source} routing provides a single path from source to destination, while \emph{graph} routing consists of multiple routes for each pair of source and destination.

As explained in Section.~\ref{sec:introduction}, the performance of the wireless network and the closed-loop control system are intertwined.
Among all the statistics one can measure regarding the performance of a wireless network, the packet delivery ratio (PDR) is at first sight the most significant for control applications, since a perfect PDR implies that all the information sent through the network is eventually delivered.
On the other hand, high PDR in multi-hop networks come associated with long delays \reply{due to redundant transmission and route diversity}, which can be longer than the information flow deadlines.
Therefore, in this paper, we propose a novel holistic controller, capable of balancing physical and wireless requirements while maintaining the stability of the plant.
The holistic controller will simultaneously compute actuation signals and command the network to update its configuration, as a function of current PDR measurements and worst-case state performance predictions.

Network reliability can be achieved through different means such as packet retransmission, route diversity, or channel diversity, among others.
Our design adopts a mechanism in which the number of transmissions of a certain route changes according to the conditions of a network and a physical plant.
In this paper, we avoid modifying more than one network configuration parameter to simplify our presentation and avoid an unnecessarily complex algorithm.

\begin{figure}[t]
	\centering
	\includegraphics[width=0.5\columnwidth]{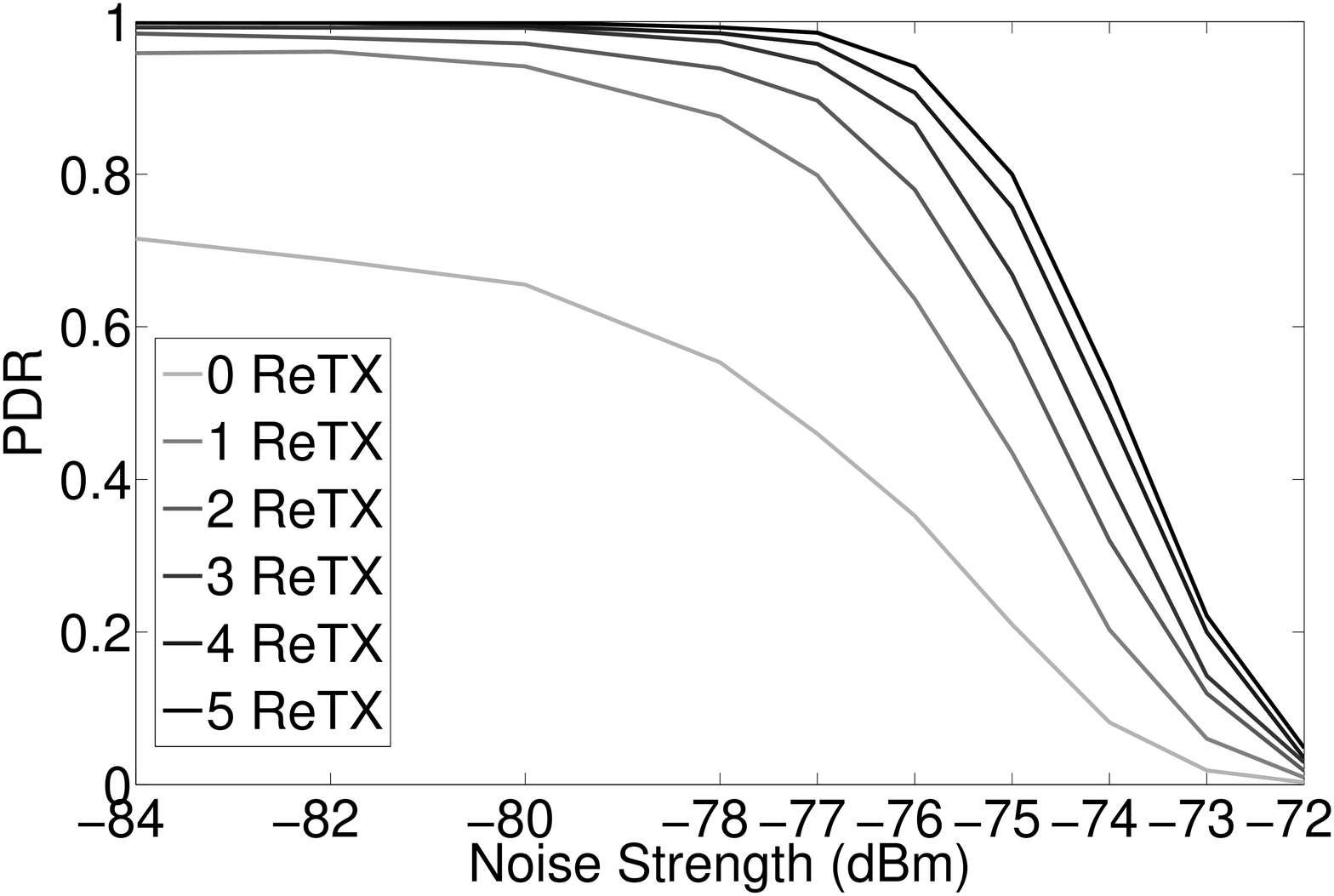}
	\caption{%
    Average PDR measured for different retransmission indices under different wireless background noise conditions.
  }
	\label{fig:re-tx}
\end{figure}

The transmission number ($\#TX$) is at the center of a tradeoff between reliability and \reply{network resources}, i.e.\ more transmissions lead to a higher delivery ratio at a cost of \reply{network resources}.
Fig.~\ref{fig:re-tx} depicts the relationship between end-to-end delivery ratio and the retransmission number ($\#ReTX$) under a 16-node WSAN. There is a diminishing return of PDR improvement as the $\#ReTX$ increases. \reply{(The settings of this set of experiments are the same with Section.~\ref{sec:results})}

\begin{figure}[t]
\centering
\begin{minipage}[b]{0.4\textwidth}
\centering
    \includegraphics[width=1\columnwidth]{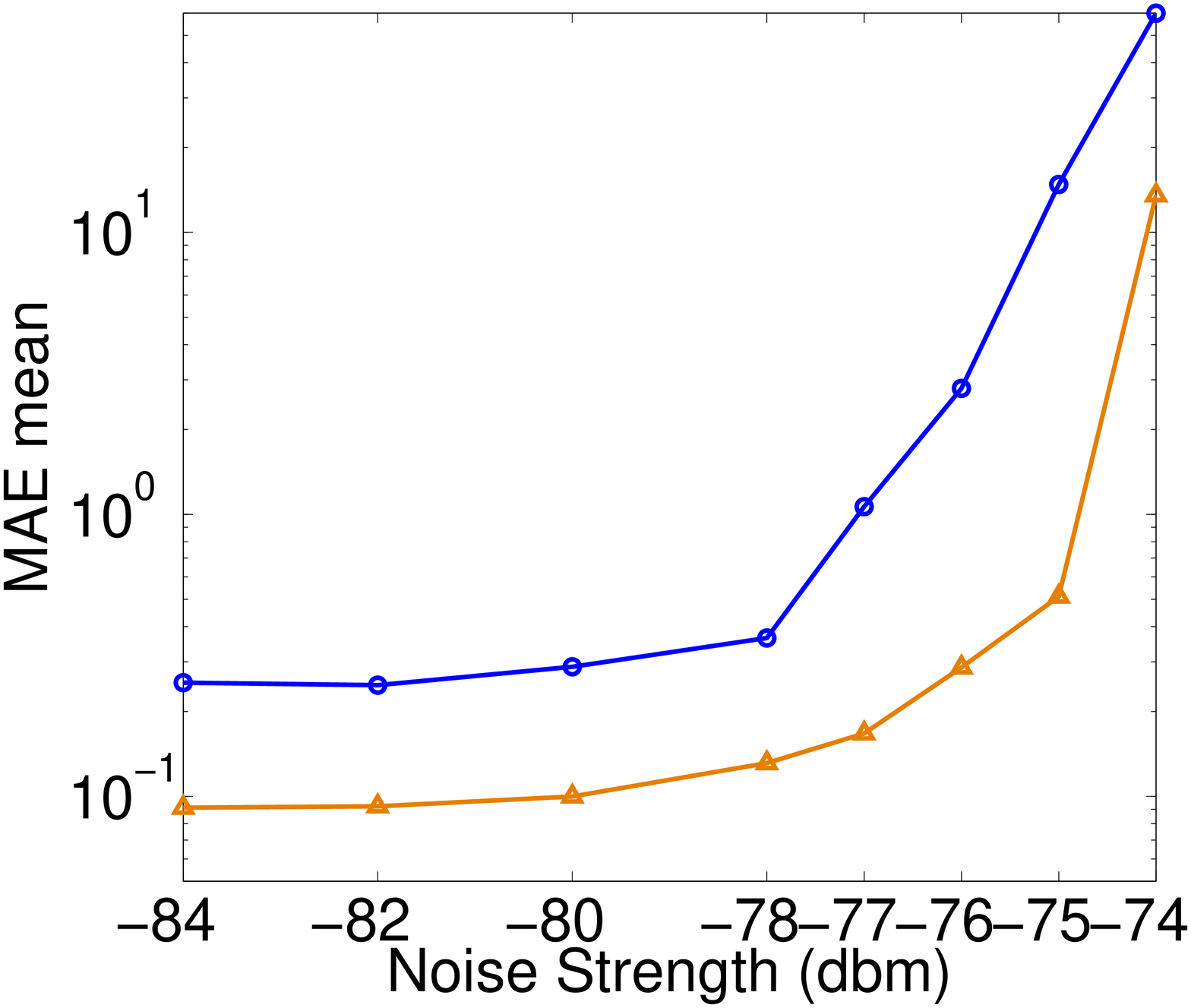}%
    \subcaption{Open-loop unstable plant}
    \label{fig:runstable}%
    \end{minipage}
    \begin{minipage}[b]{0.4\textwidth}
\centering
    \includegraphics[width=1\columnwidth]{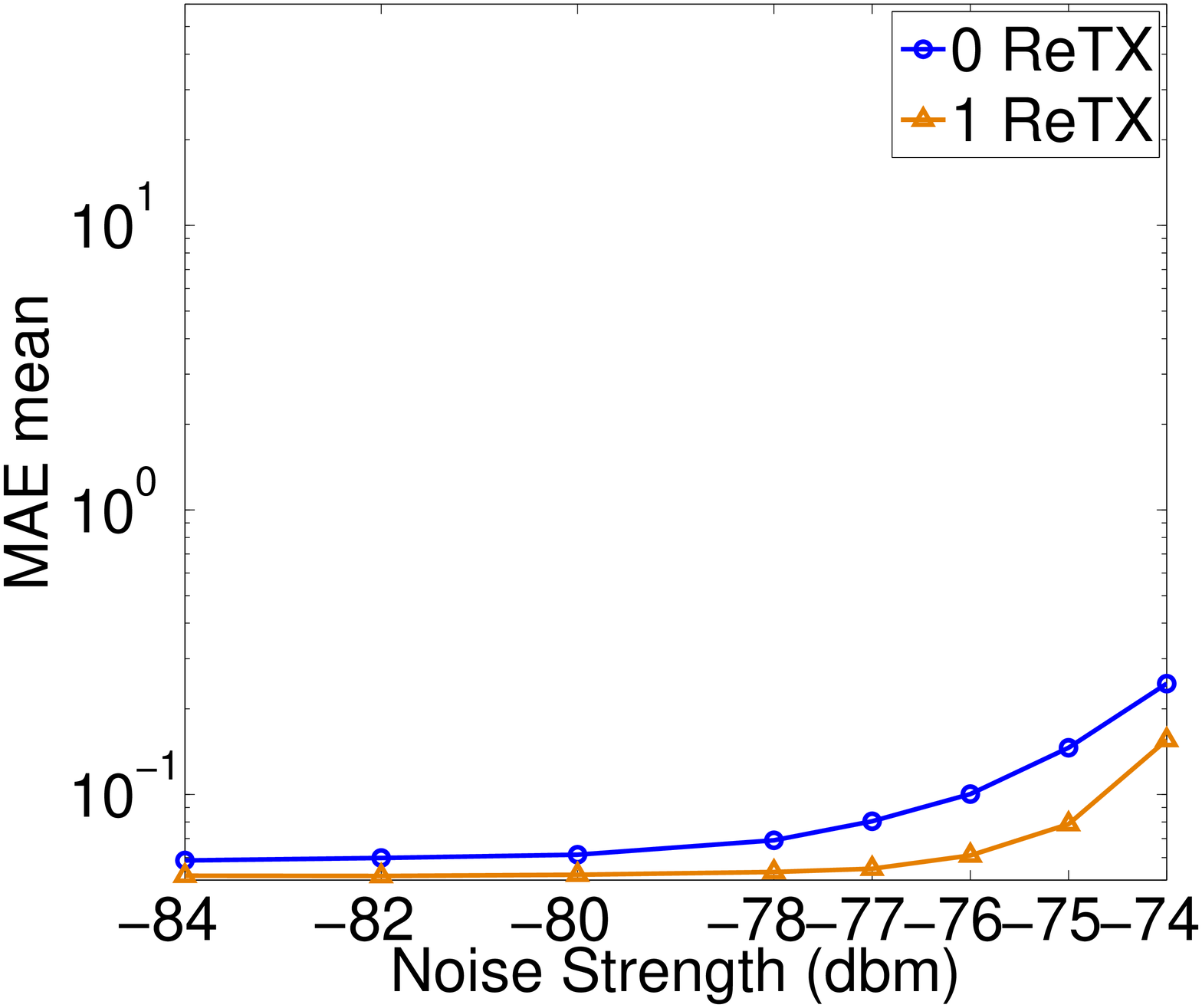}%
        \subcaption{Open-loop stable plant}
           \label{fig:rstable}%
            \end{minipage}
	\caption{%
    Mean Absolute Error (MAE) for different retransmission indices under different wireless background noise conditions, evaluated on two linear systems: one open-loop unstable (a), the other open-loop stable (b).
    Note how higher PDRs have different consequences depending on the properties of the physical plant.
  }
	\label{fig:delivery}
\end{figure}

On the other hand, higher packet delivery ratios do not immediately imply a good closed-loop performance in the physical plant, as shown in Fig.~\ref{fig:delivery}. \reply{(The settings of this set of experiments are the same with Section.~\ref{sec:results})}
Indeed, internal properties of the physical plant, such as its stability, limit and shape the effect that improvements in network communication have on the final control objective.
This is our motivation to create a new holistic controller that collects and intertwines information from both physical system and wireless network, adapting all the available parameters towards the goal of stable and safe physical executions.

In this paper, we do not aim only to find theoretical sufficient conditions to guarantee the stability of the physical plant, but we also aim to provide an implementable algorithm for the holistic controller and network manager.
\section{Holistic Controller Design}
\label{sec:controllerdesign}
%\note{Balance descriptions of stochastic properties, also do this in related works.}

In this section, we focus on three key areas to achieve our objective of designing a stable controller over a WirelessHART multi-hop network.
First, we show that in an ideal case, where the network delivers every packet with no delay, the MPC scheme in~\eqref{eq:mpc} results in asymptotically stable executions.
We achieve this goal by showing that the function $V(x)$, defined in~\eqref{eq:mpc}, is in fact a Lyapunov function.
Second, we find theoretical bounds for the worst-case evolution of the Lyapunov function $V(x)$.
Thus, if the wireless control architecture described in Section.~\ref{sec:wcs} results in values of the Lyapunov function that violate the worst-case bounds, then we must adjust the parameters in the wireless network to increase reliability.
Third, we use this adaptation principle to build a holistic algorithm that guarantees the stability of the physical plant while simultaneously reducing the latency and power usage of the wireless network.

Without loss of generality, we assume throughout this section that $\bar{x} = 0$ and $\bar{u} = 0$, as defined in~\eqref{eq:state_ref}, to simplify our notation.

\subsection{Stable Control of the Physical Plant}
\label{sec:cost_lyapunov}

Our holistic controller uses a combination of theoretical guarantees and real-time observations to decide how many transmissions must schedule each node in the wireless network.
At the core of our algorithm is the guarantee that, using an ideal network, the MPC scheme defined in Section.~\ref{sec:wcs} results in (exponentially) asymptotically stable trajectories.

We follow the strategies described in~\cite{Scokaert1998} and~\cite[Sec.~3.3]{Mayne2000} to prove the stability of our MPC scheme.
In particular, using the notation in~\eqref{eq:mpc}, given the matrices $Q$ and $R$, we compute $P$ as the unique positive definite solution of the following discrete-time algebraic Riccati equation:
\begin{equation}
  \label{eq:uc_riccati}
  \begin{aligned}
     P = A^\tp\, P\, A + Q - A^\tp\, P\, B\, \p{R + B^\tp\, P\, B}^{-1}\, B^\tp\, P\, A,
   \end{aligned}
\end{equation}
and we define:
\begin{equation}
  \label{eq:uc_lqr}
  K = - \p{R + B^\tp\, P\, B}^{-1}\, B^\tp\, P\, A.
\end{equation}

\begin{lemma}
  \label{lemma:lyap}
  Assume that ${\mathcal X}$ and ${\mathcal U}$ are polytopes, i.e., ${\mathcal X} = \set{x \mid \Gamma_x\, x \leq b_x}$ and ${\mathcal U} = \set{u \mid \Gamma_u\, u \leq b_u}$, and consider the MPC scheme in~\eqref{eq:mpc}.

  If $P$ is defined as in~\eqref{eq:uc_riccati}, $K$ is defined as in~\eqref{eq:uc_lqr}, $S = \beta\, P$ for $\beta \geq 1$, and:
  \begin{equation}
    \label{eq:xf_set}
    {\mathcal X}_f = \set{x \mid \smat{\Gamma_x\\ \Gamma_u\, K}\, x \leq \smat{b_x\\ b_u}},
  \end{equation}
  then the system in~\eqref{eq:diffeq} is asymptotically stable and $V(x)$ is a Lyapunov function.
\end{lemma}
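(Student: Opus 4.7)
The plan is to follow the now-classical terminal-cost/terminal-set argument for receding-horizon MPC stability (as in Scokaert--Rawlings and Mayne et al., already cited by the authors), by showing that $V(x)$ defined by the optimization in \eqref{eq:mpc} satisfies the three standard Lyapunov properties: positivity, radial unboundedness, and strict descent along closed-loop trajectories. Positivity and radial unboundedness on the feasible set follow immediately from the fact that $Q$, $R$, and $S = \beta P$ are (semi)definite with $P \succ 0$, so $V(x) \geq x^\tp\, S\, x = \beta\, x^\tp P\, x$ and $V(0) = 0$ is attained by $u_j \equiv 0$. Continuity of $V$ on the feasible set is standard for parametric QPs with polytopic constraints.

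The heart of the proof is the descent step. Let $\setn{u_{t,j}^*}_{j=t}^{t+N-1}$ be the optimal sequence at state $\hat x_t$ and $\setn{x_{t,j}^*}_{j=t}^{t+N}$ the corresponding predicted trajectory; in particular $x_{t,t+N}^* \in {\mathcal X}_f$. At time $t+1$, with true state $x_{t+1} = A\,\hat x_t + B\,u_{t,t}^*$, I would build a candidate (not necessarily optimal) input sequence by shifting:
\begin{equation*}
  \tilde u_j \;=\; u_{t,j}^* \text{ for } j=t+1,\ldots,t+N-1, \qquad \tilde u_{t+N} \;=\; K\, x_{t,t+N}^*,
\end{equation*}
with associated predicted state $\tilde x_{t+N+1} = (A+BK)\, x_{t,t+N}^*$. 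Feasibility of this candidate requires two ingredients supplied by the hypotheses: (i) the terminal set ${\mathcal X}_f$ in \eqref{eq:xf_set} is positively invariant under the closed-loop map $x \mapsto (A+BK)x$ and satisfies $K\,{\mathcal X}_f \subseteq {\mathcal U}$, so that the appended input and the new terminal state stay admissible; and (ii) the LQR Riccati identity gives
\begin{equation*}
  (A+BK)^\tp P\,(A+BK) + Q + K^\tp R\, K \;=\; P,
\end{equation*}
which lets me bound the cost contribution of the new tail term by the old terminal cost.

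Putting these together, the cost of the shifted candidate at time $t+1$ satisfies, after algebraic manipulation using the Riccati equation and $S = \beta P$ with $\beta \geq 1$,
\begin{equation*}
  V(x_{t+1}) \;\leq\; V(\hat x_t) - \p{\hat x_t^{\,\tp} Q\, \hat x_t + (u_{t,t}^*)^\tp R\, u_{t,t}^*} - (\beta-1)\, (x_{t,t+N}^*)^\tp \p{Q + K^\tp R K}\, x_{t,t+N}^*.
\end{equation*}
Since $R \succ 0$ and $Q \succeq 0$, this yields $V(x_{t+1}) - V(\hat x_t) \leq -\alpha\,\norm{\hat x_t}^2$ for some $\alpha > 0$ (using detectability of $(A,Q^{1/2})$ implied by $(A,C)$ observability and an appropriate choice of $Q$), which together with the quadratic upper and lower bounds on $V$ delivers exponential asymptotic stability of the origin.

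I expect the main obstacle to be the invariance step (i): the set ${\mathcal X}_f$ in \eqref{eq:xf_set} as written only guarantees one-step admissibility of the state and LQR input, not closed-loop invariance under $A+BK$. The standard remedy is either to restrict ${\mathcal X}_f$ to the maximal $(A+BK)$-invariant subset of the polytope in \eqref{eq:xf_set} (which is again a polytope, computed by the Gilbert--Tan algorithm), or to shrink it to a sublevel set $\setn{x : x^\tp P x \leq \rho}$ of the terminal cost contained inside the polytope, for which invariance under $A+BK$ is automatic since $(A+BK)^\tp P (A+BK) \preceq P$. I would adopt the sublevel-set interpretation in the proof, which keeps the notation of \eqref{eq:xf_set} intact by reading it as the constraint-defining polytope within which the invariant region is taken.
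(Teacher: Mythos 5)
Your proposal is correct and follows essentially the same route as the paper: the paper's proof simply verifies Assumptions~A1--A4 of Mayne et al.\ (terminal-set state admissibility, terminal-controller input admissibility, positive invariance, and the terminal-cost decrease condition), with the decrease condition discharged by exactly the Riccati manipulation you use, namely that $\p{A+B\,K}^\tp S\,\p{A+B\,K} - S + Q + K^\tp R\, K = (1-\beta)\,\p{Q + K^\tp R\, K}$ is negative semi-definite for $\beta \geq 1$. The one point you flag as the main obstacle --- positive invariance of ${\mathcal X}_f$ under $A+B\,K$, which the polytope in~\eqref{eq:xf_set} does not by itself guarantee --- is a legitimate concern, but the paper discharges it by citing Lemma~1 of Scokaert and Rawlings rather than by shrinking ${\mathcal X}_f$ to an invariant sublevel set of $x^\tp P\, x$ as you propose; your remedy is valid but is a departure from the terminal set actually written in the lemma statement.
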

\begin{proof}
  It is sufficient to show that we satisfy the conditions in Assumptions~A1 to~A4 in~\cite[Sec.~3.3]{Mayne2000}, where $\kappa_f(x) = K\, x$.
  Indeed, Assumptions~A1 and~A2 are trivially satisfied thanks to our definition in~\eqref{eq:xf_set}.
  Assumption~A3 follows Lemma~1 in~\cite{Scokaert1998}, and Assumption~A4 is satisfied since a simple algebraic manipulation of~\eqref{eq:uc_riccati} implies that:
  \begin{equation}
    \label{eq:pf_lemma1}
    \p{A + B\, K}^\tp\, S\, \p{A + B\, K} - S + Q + K^\tp\, R\, K =
     (1 - \beta)\, \p{Q + K^\tp\, R\, K},
  \end{equation}
  where the right-hand side is a negative semi-definite matrix, as desired.
\end{proof}
Note that $\beta$ allows us to easily adjust the transient response while maintaining the stability guarantee.
Also note that in pathological situations, the set ${\mathcal X}_f$ in~\eqref{eq:xf_set} could be empty or have no interior.
A discussion regarding those situations is beyond the scope of this paper; we refer the interested reader to~\cite[Ch.~5.2.3]{Boyd2004}.

Our holistic controller uses the value of the Lyapunov function $V(\hat{x}_t)$ to test if the wireless network has an undesired impact over the performance of the physical plant.
Our test requires calculating three parameters, $\set{\alpha_i}_{i=1}^3$, as explained below.

\begin{lemma}
  \label{lemma:alpha_1}
  Consider $V(x)$ as defined in~\eqref{eq:mpc}.
  Let $P_0$ be positive definite matrix computed recursively via the time-varying discrete-time Riccati equation:
  \begin{equation}
    P_{k-1} \!= A^\tp\, P_k\, A + Q - A^\tp\, P_k\, B\, \p{R + B^\tp\, P_k\, B}^{-1}\, B^\tp\, P_k\, A,
  \end{equation}
  with $P_N = S$, and with $\alpha_1$ as the smallest eigenvalue of $P_0$.
  Then $V(x) \geq \alpha_1\, \norm{x}^2$.
\end{lemma}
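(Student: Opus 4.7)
The plan is to recognize $V(x)$ as the optimal value of a constrained finite-horizon LQR problem and compare it to its unconstrained relaxation, for which a closed-form expression is available via backward dynamic programming.

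First, I would define the auxiliary problem $V_{\text{unc}}(x)$ obtained from~\eqref{eq:mpc} by dropping the constraints $x_j \in \mathcal{X}$, $u_j \in \mathcal{U}$, and $x_{t+N} \in \mathcal{X}_f$, keeping only the dynamics $x_{j+1} = A x_j + B u_j$ with $x_t = x$ and the quadratic cost with matrices $Q$, $R$, and terminal matrix $S$. Standard discrete-time LQR theory (backward dynamic programming on a quadratic-quadratic problem) shows that $V_{\text{unc}}(x) = x^\tp P_0 x$, where $\set{P_k}_{k=0}^N$ is generated by the backward Riccati recursion stated in the lemma with terminal condition $P_N = S$. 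Concretely, one starts from $J_N(x) = x^\tp S x$ and proves by induction that $J_k(x) = x^\tp P_k x$, using that the minimizer of $x^\tp Q x + u^\tp R u + (Ax + Bu)^\tp P_{k+1} (Ax + Bu)$ in $u$ is $u^* = -(R + B^\tp P_{k+1} B)^{-1} B^\tp P_{k+1} A\, x$ and substituting back yields exactly the Riccati update.

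Second, I would observe the comparison step: for every $x$ and every feasible trajectory $\setn{u_j}_{j=t}^{t+N-1}$ of the constrained problem~\eqref{eq:mpc}, that same trajectory is also feasible for the unconstrained problem (the unconstrained feasible set is strictly larger). Hence the minimum of the cost over the smaller set dominates the minimum over the larger set, giving $V(x) \geq V_{\text{unc}}(x) = x^\tp P_0 x$. For $x$ such that~\eqref{eq:mpc} is infeasible, $V(x) = +\infty$ by convention and the inequality holds trivially.

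Third, since $P_0$ is positive definite with smallest eigenvalue $\alpha_1 > 0$, the Rayleigh-Ritz inequality gives $x^\tp P_0 x \geq \alpha_1 \norm{x}^2$ for all $x \in \R^n$. Chaining the two bounds yields $V(x) \geq \alpha_1 \norm{x}^2$, as claimed.

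The only point needing care is the backward-DP derivation of $V_{\text{unc}}(x) = x^\tp P_0 x$; this is textbook material, so I would simply cite a standard reference (e.g., Bertsekas) rather than redo the induction in full. No nontrivial obstacle is expected: the constrained $\geq$ unconstrained comparison is immediate once the relaxation is in place, and the final eigenvalue bound is elementary.
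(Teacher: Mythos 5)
Your proposal is correct and follows exactly the paper's own argument: relax~\eqref{eq:mpc} to the unconstrained finite-horizon LQR problem whose value is $x^\tp P_0 x$ (the paper also cites Bertsekas for this), note that the constrained optimum dominates the unconstrained one, and finish with the smallest-eigenvalue bound. The only difference is that you spell out the backward dynamic-programming induction and the infeasible-case convention, which the paper leaves implicit.
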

\begin{proof}
  If we relax the optimization problem by setting ${\mathcal U} = \R^m$ and ${\mathcal X} = {\mathcal X}_f = \R^n$, then we obtain a finite-horizon LQR problem.
  As explained in~\cite[Ch.~4.1]{Bertsekas2005}, the value of a finite-horizon LQR problem with initial condition $x$ is $x^\tp\, P_0\, x$; hence $V(x) \geq x^\tp\, P_0\, x \geq \alpha_1\, \norm{x}^2$.
\end{proof}
Note that if we choose $\beta = 1$, then $P_0 = P$, as defined in~\eqref{eq:uc_riccati}; thus we simplify the numerical calculation of $\alpha_1$.

\begin{lemma}
  \label{lemma:alpha_2}
  Consider $V(x)$, $A$, $Q$, and $S$ as defined in~\eqref{eq:mpc}.
  Let:
  \begin{equation}
    M = \sum_{j=0}^{N-1} \p{A^j\, Q\, A^j} + A^N\, S\, A^N,
  \end{equation}
  and let $\alpha_2$ be the largest eigenvalue of $M$.
  Then $V(x) \leq \alpha_2\, \norm{x}^2$.
\end{lemma}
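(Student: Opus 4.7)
The plan is to exploit that $V(\hat{x}_t)$ is defined as a minimum over admissible input sequences, so \emph{any} admissible sequence yields an upper bound on $V(x)$. I would pick the simplest possible sequence, $u_j = 0$ for $j = t,\dotsc,t+N-1$, and bound the resulting open-loop cost by $\alpha_2\,\norm{x}^2$.

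With $x_t = x$ and $u_j = 0$, the dynamics in~\eqref{eq:mpc} give the free response $x_{t+j} = A^j x$ for $j = 0,\dotsc,N$. Substituting into the objective yields
\begin{equation*}
  \sum_{j=0}^{N-1} x^\tp (A^j)^\tp Q\, A^j\, x + x^\tp (A^N)^\tp S\, A^N\, x = x^\tp M\, x,
\end{equation*}
reading the $A^j Q A^j$ in the statement of $M$ as the symmetric congruence $(A^j)^\tp Q\, A^j$ that the quadratic cost naturally produces. Since $Q, S \succeq 0$, each summand is positive semi-definite, hence $M$ is symmetric and positive semi-definite, and the Rayleigh quotient inequality gives $x^\tp M\, x \leq \lambda_{\max}(M)\,\norm{x}^2 = \alpha_2\,\norm{x}^2$. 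Chaining these two inequalities with $V(x) \leq x^\tp M\, x$ yields the claim.

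The main subtlety, and the one step that needs care rather than routine manipulation, is admissibility of the all-zero input: we need $0 \in {\mathcal U}$ (immediate from $\bar{u} = 0 \in {\mathcal U}$), $A^j x \in {\mathcal X}$ for $j < N$, and $A^N x \in {\mathcal X}_f$. In general these state inclusions can fail. I would handle this exactly as in Lemma~\ref{lemma:alpha_1}: relax to ${\mathcal U} = \R^m$ and ${\mathcal X} = {\mathcal X}_f = \R^n$, observe that the relaxed value $\widetilde{V}(x)$ satisfies $\widetilde{V}(x) \leq V(x)$ is the \emph{wrong} direction, and instead argue directly that the quoted bound is meant as a quadratic comparison on the feasible set --- the typical regime in which $V$ serves as a Lyapunov function. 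Either way, the upper bound $V(x) \leq \alpha_2\,\norm{x}^2$ follows from the zero-input cost computation above.
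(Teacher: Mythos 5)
Your approach is the same as the paper's: the paper's entire proof is ``$u_j = 0$ for each $j$ is a feasible input signal, thus $V(x) \leq x^\tp M\, x \leq \alpha_2 \norm{x}^2$,'' and your zero-input cost computation plus the Rayleigh-quotient step is exactly that argument (your reading of $A^j\, Q\, A^j$ as the congruence $(A^j)^\tp Q\, A^j$ is also the intended one). The feasibility concern you raise in the last paragraph is legitimate, and you are right that the relaxation device from Lemma~\ref{lemma:alpha_1} points the wrong way for an upper bound --- enlarging the feasible set can only decrease the minimum, which bounds $V$ from below, not above. But be aware that the paper does not close this gap either; it simply asserts that the zero input is feasible. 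So your proof is no weaker than the published one: the bound is rigorously justified only on states $x$ whose free response satisfies $A^j x \in {\mathcal X}$ for $j < N$ and $A^N x \in {\mathcal X}_f$ (automatic when ${\mathcal X} = {\mathcal X}_f = \R^n$, as in the paper's case study for ${\mathcal X}$ but not for ${\mathcal X}_f$), and an honest statement of the lemma would either add that hypothesis or replace the terminal set by $\R^n$ when computing $\alpha_2$. Your instinct to flag this rather than paper over it is the right one; just don't present the final ``argue directly'' sentence as if it were a resolution, because it isn't.
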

\begin{proof}
  The proof follows directly after noting that $u_j = 0$ for each $j$ is a feasible input signal.
  Thus $V(x) \leq x^\tp\, M\, x \leq \alpha_2\, \norm{x}^2$.
\end{proof}

\begin{lemma}
  \label{lemma:alpha_3}
  Consider the system in~\eqref{eq:diffeq} with the closed-loop controller in~\eqref{eq:mpc}.
  Let $\alpha_3$ be the smallest eigenvalue of $Q$.
  Then $V(x_{t+1}) - V(x_t) \leq -\alpha_3\, \norm{x_t}^2$.
\end{lemma}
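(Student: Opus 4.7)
The plan is to follow the classical suboptimality-based Lyapunov decrease argument for MPC, reusing ingredients already established in the proof of Lemma~\ref{lemma:lyap}. Let $\setn{u_{t,j}^*}_{j=t}^{t+N-1}$ be the optimal input at time $t$, with associated state trajectory $\setn{x_{t,j}^*}_{j=t}^{t+N}$ and $x_{t,t}^* = x_t$. Since $u_t = u_{t,t}^*$ is the applied control, we have $x_{t+1} = x_{t,t+1}^*$. The idea is to construct a \emph{feasible but generally suboptimal} candidate input at time $t+1$ by shifting the tail of the optimal sequence and appending the LQR terminal controller $K$ from~\eqref{eq:uc_lqr}.

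Concretely, I would define the candidate $\tilde u_j = u_{t,j}^*$ for $j = t+1,\dotsc,t+N-1$, and $\tilde u_{t+N} = K\,\tilde x_{t+N}$, where $\tilde x_{t+1}=x_{t+1}$ and $\tilde x_{j+1} = A\,\tilde x_j + B\,\tilde u_j$. By construction $\tilde x_j = x_{t,j}^*$ for $j = t+1,\dotsc,t+N$, so the state and input constraints inherited from the optimum at time $t$ are satisfied on $\set{t+1,\dotsc,t+N-1}$. The terminal state of the shifted window is $\tilde x_{t+N} = x_{t,t+N}^* \in {\mathcal X}_f$, and the definition of ${\mathcal X}_f$ in~\eqref{eq:xf_set} guarantees both $\tilde u_{t+N}= K\tilde x_{t+N}\in {\mathcal U}$ and $\tilde x_{t+N}\in {\mathcal X}$, while the new terminal state $\tilde x_{t+N+1} = (A+BK)\tilde x_{t+N}$ must be shown to lie in ${\mathcal X}_f$. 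This invariance of ${\mathcal X}_f$ under $A+BK$ is exactly what Assumption~A3 in~\cite[Sec.~3.3]{Mayne2000} provides (used in Lemma~\ref{lemma:lyap}), so the candidate is feasible at time $t+1$.

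Feasibility gives $V(x_{t+1}) \leq \tilde J$, where $\tilde J$ is the MPC cost of the candidate. Subtracting $V(x_t)$ from $\tilde J$, the stage-cost terms for indices $t+1,\dotsc,t+N-1$ cancel, leaving
\begin{equation*}
  V(x_{t+1}) - V(x_t) \leq -\p{x_t^\tp Q\,x_t + u_{t,t}^{*\,\tp} R\, u_{t,t}^*} + \Delta,
\end{equation*}
where $\Delta$ collects the new terminal stage cost plus the change in terminal cost, namely
\begin{equation*}
  \Delta = \tilde x_{t+N}^\tp Q\,\tilde x_{t+N} + \tilde u_{t+N}^\tp R\,\tilde u_{t+N} + \tilde x_{t+N+1}^\tp S\,\tilde x_{t+N+1} - \tilde x_{t+N}^\tp S\,\tilde x_{t+N}.
\end{equation*}
Substituting $\tilde u_{t+N} = K\tilde x_{t+N}$ and $\tilde x_{t+N+1} = (A+BK)\tilde x_{t+N}$, the bracket $\Delta$ becomes $\tilde x_{t+N}^\tp \bigl[(A+BK)^\tp S(A+BK) - S + Q + K^\tp R K\bigr]\tilde x_{t+N}$, which by~\eqref{eq:pf_lemma1} equals $(1-\beta)\,\tilde x_{t+N}^\tp (Q + K^\tp R K)\tilde x_{t+N} \leq 0$ for $\beta\geq 1$.

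Therefore $V(x_{t+1}) - V(x_t) \leq -x_t^\tp Q\, x_t - u_{t,t}^{*\,\tp} R\, u_{t,t}^* \leq -x_t^\tp Q\, x_t \leq -\alpha_3\,\norm{x_t}^2$, since $R \succeq 0$ and $\alpha_3 = \lambda_{\min}(Q)$. The only delicate step is the invariance and constraint-compatibility of ${\mathcal X}_f$ under the LQR terminal feedback, which is exactly what was engineered into~\eqref{eq:xf_set}; the rest is cancellation and an application of~\eqref{eq:pf_lemma1}. I expect no other genuine obstacle beyond carefully bookkeeping the indices in the shifted sequence.
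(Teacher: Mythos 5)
Your proposal is correct and follows essentially the same route as the paper's own proof: both shift the tail of the optimal sequence, append the terminal LQR feedback $K\,x_{t+N}$, use feasibility of this candidate to upper-bound $V(x_{t+1})$, and then cancel the repeated stage costs and invoke~\eqref{eq:pf_lemma1} to dispose of the terminal terms. Your write-up is in fact somewhat more explicit than the paper's (spelling out the invariance of ${\mathcal X}_f$ and the sign of $\Delta$), but there is no substantive difference in the argument.
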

\begin{proof}
  Let $\set{u_{t,j}^*}_{j=t}^{t+N-1}$ be the optimal input signal associated with the value function $V(x_t)$.
  Let $\set{x_j}_{j=t}^{t+N}$ be execution resulting from applying the input $\set{u_{t,j}^*}_{j=t}^{t+N-1}$.

  Note that the input signal $\set{u_{t,t+1}^*, u_{t,t+2}^*, \dotsc, u_{t,t+N-1}^*, K\, x_{t+N}}$ belongs to the feasible set of the problem with value $V(x_{t+1})$.
  Hence:
  % \begin{multline}
  \begin{equation}
    \label{eq:pf_lemma4}
    V(x_{t+1})
    \leq \sum_{j=t+1}^{t+N-1} \p{x_j^\tp\, Q\, x_j + \p{u_{t,j}^*}^\tp\, R\, u_{t,j}^*}
    + x_{t+N}^\tp\, \pb{Q + K^\tp\, R\, K + (A+B\, K)^\tp\, S\, (A+B\, K)}\, x_{t+N},
  \end{equation}
  % \end{multline}
  and:
  \begin{equation}
    V(x_{t+1}) - V(x_t) \leq - x_t\, Q\, x_t \leq - \alpha_3\, \norm{x_t}^2,
  \end{equation}
  where we use the result in~\eqref{eq:pf_lemma1} after eliminating all the repeated terms in~\eqref{eq:pf_lemma4} and $V(x_t)$.
\end{proof}

Using the parameters $\set{\alpha_i}_{i=1}^3$, we now obtain two results that will become the test conditions to evaluate the performance loss of the physical control loop due to information loss in the wireless network.

\begin{lemma}
  \label{lemma:safety_bnd}
  Consider the system in~\eqref{eq:diffeq} with the closed-loop controller in~\eqref{eq:mpc}.
  If $V(x) \leq \alpha_1\, \gamma$, then $\norm{x}^2 \leq \gamma$.
\end{lemma}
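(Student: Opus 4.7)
The plan is to exploit Lemma~\ref{lemma:alpha_1} directly, since that result already provides the quadratic lower bound $V(x) \geq \alpha_1\, \norm{x}^2$ that is precisely dual to the hypothesis $V(x) \leq \alpha_1\, \gamma$.

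First I would note that $\alpha_1$, being the smallest eigenvalue of the positive definite matrix $P_0$ obtained from the backward Riccati recursion in Lemma~\ref{lemma:alpha_1}, is strictly positive; the positive definiteness of $P_0$ follows from the fact that the one-step Riccati map preserves positive definiteness whenever $Q$ is positive semi-definite and the terminal cost $S = \beta P$ is positive definite (which it is, as $\beta \geq 1$ and $P$ solves~\eqref{eq:uc_riccati}). This divisibility by $\alpha_1$ is the only non-cosmetic fact that needs to be invoked.

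Then I would simply chain the two inequalities: from Lemma~\ref{lemma:alpha_1},
\begin{equation*}
  \alpha_1\, \norm{x}^2 \leq V(x),
\end{equation*}
and combining this with the hypothesis $V(x) \leq \alpha_1\, \gamma$ gives $\alpha_1\, \norm{x}^2 \leq \alpha_1\, \gamma$. Dividing through by $\alpha_1 > 0$ yields $\norm{x}^2 \leq \gamma$, which is the conclusion.

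There is essentially no obstacle here; the lemma is a one-line corollary of Lemma~\ref{lemma:alpha_1}, whose role in the paper is precisely to convert sublevel sets of the Lyapunov function $V$ into norm bounds usable as safety certificates. The only thing worth stating carefully is the strict positivity of $\alpha_1$, so that the division is legitimate and the bound is non-trivial.
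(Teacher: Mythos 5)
Your proof is correct and is essentially the paper's argument: both rest entirely on the lower bound $V(x) \geq \alpha_1\, \norm{x}^2$ from Lemma~\ref{lemma:alpha_1} together with $\alpha_1 > 0$, the paper merely phrasing the one-line deduction as a contrapositive while you chain the inequalities directly. Your explicit remark that $\alpha_1 > 0$ (from the positive definiteness of $P_0$) is needed for the division is a fair point that the paper leaves implicit, but it does not change the substance of the argument.
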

\begin{proof}
  Suppose that $\norm{x}^2 > \gamma$, then $V(x) \geq \alpha_1\, \norm{x}^2 > \alpha_1\, \gamma$.
  The result follows, using the contrapositive of the argument above.
\end{proof}

\begin{lemma}
  \label{lemma:lyap_worst_case}
  Consider the system in~\eqref{eq:diffeq} with the closed-loop controller in~\eqref{eq:mpc}.
  Then, for each $j \in \N$:
  \begin{equation}
    V(x_{t+j}) \leq \p{1 - \frac{\alpha_3}{\alpha_2}}^j\, V(x_t).
  \end{equation}
\end{lemma}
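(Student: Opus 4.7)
The plan is to prove the bound by induction on $j$, using Lemmas~\ref{lemma:alpha_2} and~\ref{lemma:alpha_3} to obtain a single-step geometric contraction of the Lyapunov function, and then iterating.

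For the base case $j=0$ the inequality is the trivial identity $V(x_t)\leq V(x_t)$. For the inductive step, the key observation is that Lemma~\ref{lemma:alpha_3} gives an additive decrease of $V$ in terms of $\norm{x_t}^2$, while Lemma~\ref{lemma:alpha_2} lets me re-express $\norm{x_t}^2$ as a lower bound for $V(x_t)$ itself. Specifically, from $V(x_t)\leq \alpha_2\norm{x_t}^2$ I get $\norm{x_t}^2 \geq V(x_t)/\alpha_2$, and substituting into Lemma~\ref{lemma:alpha_3} yields
\begin{equation*}
  V(x_{t+1}) \leq V(x_t) - \alpha_3\norm{x_t}^2 \leq V(x_t) - \frac{\alpha_3}{\alpha_2}\, V(x_t) = \p{1 - \frac{\alpha_3}{\alpha_2}}\, V(x_t).
\end{equation*}
Applying this contraction repeatedly (formally, chaining it to the induction hypothesis $V(x_{t+j})\leq (1-\alpha_3/\alpha_2)^j V(x_t)$ and then invoking the one-step bound at time $t+j$) delivers the claim for $j+1$.

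The only subtlety worth flagging is ensuring that the contraction factor $1-\alpha_3/\alpha_2$ is nonnegative, so that the induction produces a genuinely meaningful bound rather than vacuous one. Since $\alpha_3$ is the smallest eigenvalue of $Q$ and $\alpha_2$ is the largest eigenvalue of $M = \sum_{j=0}^{N-1} A^j Q A^j + A^N S A^N$, the matrix $M$ dominates the single summand $Q$ (for $j=0$ in the sum, noting $A^0 = I$), hence $\alpha_2 \geq \alpha_3 \geq 0$ and the factor lies in $[0,1]$. This is a short observation but worth recording because it is what makes the geometric decay interpretation valid. I do not expect any step here to be a genuine obstacle; the real content is packaged inside Lemmas~\ref{lemma:alpha_2} and~\ref{lemma:alpha_3}, and this lemma is essentially the standard Lyapunov-to-geometric-decay argument that converts an additive decrease plus a quadratic upper bound on $V$ into exponential stability.
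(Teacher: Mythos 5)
Your proof is correct and follows essentially the same route as the paper: combine Lemmas~\ref{lemma:alpha_2} and~\ref{lemma:alpha_3} into the one-step contraction $V(x_{t+1}) \leq \p{1 - \alpha_3/\alpha_2}V(x_t)$ and iterate by induction. Your added observation that $\alpha_2 \geq \alpha_3 \geq 0$ (so the factor lies in $[0,1]$ and the induction chains without flipping the inequality) is a detail the paper leaves implicit, and it is worth having.
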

\begin{proof}
  Using the results in Lemmas~\ref{lemma:alpha_2} and~\ref{lemma:alpha_3} we get that, for each $t \in \N$,
  $V(x_{t+1}) \leq \pb{1 - \frac{\alpha_3}{\alpha_2}}\, V(x_t)$.
  The desired result follows by induction.
\end{proof}

\subsection{$\#TX$ Adaptation}
\label{sec:retx_adapt}
As explained in Section.~\ref{sec:wcs}, among all the configuration parameters of the wireless network that we can modify, we have chosen to adapt the $\#TX$, denoted $\eta_t$, that each node in the network uses to determine the maximum number of retries used to send a packet before it is dropped.
While one can intuitively expect that more transmissions should improve the physical control loop performance, they come associated with longer delays and shorter battery lifetimes for the nodes.
Moreover, it is not immediately clear how to quantify the impact that more transmissions have in the control loop, as shown in Fig.~\ref{fig:delivery}.

The value of the Lyapunov function $V(\hat{x}_t)$ at each $t$ \reply{for each control loop}, and the bounds in the lemmas above, give us a constructive testing mechanism to evaluate the impact that the loss of information in the wireless network has on the physical plant.
Suppose that a setpoint $\bar{x}$ has been computed as in~\eqref{eq:state_ref}, and a desired maximum deviation from that setpoint, $\gamma$, has been defined, i.e. the goal is to maintain $\norm{x_t - \bar{x}}^2 \leq \gamma$ for each $t \in \N$.
To build a $\#TX$ adaptation algorithm based on the analytical results above, we must first establish a set of principles that our algorithm must satisfy:
\begin{itemize}
\item If $V(x_t) \leq \alpha_1\, \gamma$, then $\eta_{t+1} \leq \eta_t$, i.e.\ $\eta_t$ will not increase, since the physical plant is within acceptable bounds, as shown in Lemma~\ref{lemma:safety_bnd}.
\item Given $\lambda \in \p{0,1}$, if $V(x_t) \geq \lambda\, \alpha_1\, \gamma$, then $\eta_{t+1} \geq \eta_t$, i.e.\ $\eta_t$ will not decrease, since the physical plant might get closer to the safety bound in Lemma~\ref{lemma:safety_bnd}.
\item If $V(x_{t+j}) \leq \pb{1 - \frac{\alpha_3}{\alpha_2}}^j\, V(x_t)$, then $\eta_{t+j} \leq \eta_t$, i.e.\ $\eta_t$ will not increase, since the physical is evolving towards its equilibrium point within expected bounds, as shown in Lemma~\ref{lemma:lyap_worst_case}.
\item If the current PDR, denoted $\rho_t$, is below a threshold, say $\rho_t < \rho_{\min}$, then $\eta_{t+1} \geq \eta_t$, i.e.\ $\eta_t$ will not decrease, since the network must maintain a minimum connectivity level.
\end{itemize}
The parameter $\lambda$ is used to create a dead-band between increases and decreases of the $\#TX$.
Indeed, if $V(x_t) \in \sp{\lambda\, \alpha_1\, \gamma, \alpha_1\, \gamma}$ then $\eta_t$ remains constant.

% \begin{figure}[t]
\SetKwFor{Loop}{Loop}{}{EndLoop}
\begin{algorithm}[t]
  \SetAlgoNoLine
  \KwIn{$t,\tau_1,\tau_2 \in \N$, $t_0 = t$, $\lambda \in (0,1)$, $\rho_{\min} \in [0,1]$, $\eta_{\max} \in \N$, $\delta = -1$, and an initial $\#TX$ $\eta_t \in \set{1,\dotsc,\eta_{\max}}$.}
  \KwOut{$\#TX$ $\eta_t$}
  \Loop{}
  {
    Evaluate $V(\hat{x}_t)$ as defined in~\eqref{eq:mpc}\;
    Measure the PDR $\rho_t$\;
    \uIf{$V(\hat{x}_t) < \lambda\, \alpha_1\, \gamma$ \textbf{and} $\eta_t > 0$ \textbf{and} $\rho_t \geq \rho_{\min}$}
    {
      \If{$\delta \neq 0$}
      {
        $t_0 \gets t$\;
        $\delta \gets 0$\;
      }
      \If{$t - t_0 > \tau_2$}
      {
        $t_0 \gets t$\;
        $\eta_{t+j} \gets \eta_t - 1$ for each $j \leq \tau_1$\;
        $t \gets t + \tau_1$;
      }
    }
    \uElseIf{$V(\hat{x}_t) > \alpha_1\, \gamma$ \textbf{and} $\eta_t < \eta_{\max}$}
    {
      \If{$\delta \neq 1$}
      {
        $t_0 \gets t$\;
      }
      \If{$\delta \neq 1$ \textbf{or} $V(\hat{x}_t) > \pb{1 - \frac{\alpha_3}{\alpha_2}}^{t-t_0}\, V(\hat{x}_{t_0})$}
      {
        $\delta \gets 1$\;
        $\eta_{t+j} \gets \eta_t + 1$ for each $j \leq \tau_1$\;
        $t \gets t + \tau_1$\;
      }
    }
    \Else
    {
      $\eta_{t+1} \gets \eta_t$\;
      $t \gets t + 1$\;
    }
  }
  \caption{$\#TX$ adaptation algorithm \reply{for each control loop}.}
  \label{alg:holistic}
\end{algorithm}

Our $\#TX$ adaptation algorithm \reply{for each control loop} is described in detail in Alg.~\ref{alg:holistic}.
\reply{Given a control loop,} we measure the current packet delivery ratio $\rho_t$ on each iteration, and we compute $V(\hat{x}_t)$.
With these two values, we decide whether we decrease, increase, or maintain the $\#TX$.
The variable $\delta$ is used as an internal state to determine whether the last $\#TX$ change was an increase ($\delta = 1$) or a decrease ($\delta = 0$).
The parameter $\lambda \in (0,1)$ determines the width of the dead-band for the Lyapunov function where the $\#TX$ is left unchanged.
The parameter $\tau_1 \in \N$ is the minimum number of iterations it takes the wireless network to propagate the new transmission schedule, as explained in Section.~\ref{sec:wsn}.
The parameter $\tau_2 \in \N$ is used to slow down the $\#TX$ decreases, since those might eventually result in violations of the safety bound in Lemma~\ref{lemma:safety_bnd}.
Finally, the parameters $\eta_{\max}$ and $\rho_{\min}$ are chosen such that all routes can be scheduled and minimum network control information is still delivered.

\reply{%
Our control strategy cannot mathematically guarantee the closed-loop system stability of the WSAN unless extra assumptions are considered, which is a common approach in the literature~\cite{hadjicostis2002feedback, ljevsnjanin2014packetized,amin2009safe}.
%~\note{cite two or three papers related to WSAN with strong assumptions, such as those from KTH or Quevedo}.
The applicability and appropriateness of these assumptions depend on the particular properties of the industrial plant at hand, thus we avoid imposing a particular framework in this paper.
Instead, our algorithm takes a best-effort approach towards balancing closed-loop performance and network load, which is a practical heuristic in real-world scenarios.
The stability of our algorithm hinges on the relation between dropped actuation packets and their impact on the empirical Lyapunov function at each iteration, which has been studied in the past~\cite{Mazo2008,Araujo2011}.
}

% \note{About to delete.}

% The following theorem is a direct consequence of Lemmas~\ref{lemma:safety_bnd} and~\ref{lemma:lyap_worst_case}, and the algorithm in Alg.~\ref{alg:holistic}, thus we omit the proof.
% \begin{theorem}
%   Suppose that there exists a $\#TX$ $\eta^* \in \set{1,\dotsc,\eta_{\max}}$ such that, when applied to the wireless network schedule, the bound in Lemma~\ref{lemma:lyap_worst_case} is satisfied.
%   Then the algorithm in Alg.~\ref{alg:holistic} results in a stable trajectory. %, and there exists $\lambda \in (0,1)$ such that $\norm{x_t - \bar{x}}^2 > \gamma$ for only a finite number of iterations.
% \end{theorem}
 % algorithm not ready --Yehan
\section{Network Reconfiguration}
\label{sec:wsn}

In this section, we introduce a run-time reconfiguration protocol for the WSAN. \reply {Our previous research has demonstrated that a wireless control system can have different levels of resilience to packet loss for sensing and actuation.}
Motivated by the asymmetric routing idea in~\cite{li2016wireless},
We develop an asymmetric scheduling approach in which the number of packet transmissions of the sensing and actuation phase can be configured independently.
Considering that sensing data are less vulnerable against packet loss because of the state observer, we do not allocate retransmissions for sensing packets.
However, we allow a holistic controller to adaptively adjust the number of transmissions for actuation packets \reply{of each control loop} based on the physical and network conditions. \reply{This need for adjustment stems from the fact that the control performance is more sensitive to packet loss from the controller to the actuators despite the buffered control inputs.}

We next present a run-time transmission adaptation protocol.
In our design, the network manager generates a schedule that allocates enough slots to accommodate the maximum $\#TX$ over each link defined by the TX adaptation algorithm: if a transmission is scheduled at time slot $x$, then the scheduler will reserve the next $n-1$ consecutive slots for its retransmissions, where $n$ is the maximum $\#TX$ per hop.
Each schedule entry is represented by a tuple [slot\_offset, channel, sender, receiver, flowID, \#TX(flowID)].
A transmission schedule is called a superframe, which repeats itself during runtime. The slot\_offset is the relative time slot number in a superframe. The flowID specifies the flow a transmission belongs to, and \#TX(flowID) indicates the current $\#TX$ of this flow, enabling the protocol to configure $\#TX$ of each flow independently.
In Table~\ref{table:piggyback}, $1 TX$ to $3 TX$ presents a TDMA-schedule for $2$ flows $F1$ and $F2$ that deliver data through nodes A $\rightarrow$ B $\rightarrow$ C and A $\rightarrow$ B $\rightarrow$ D, respectively, when \#TX(F1) and \#TX(F2) vary from $1$ to $3$ transmissions. The superframe has a length of 12 time slots. Note that other routing and scheduling algorithms exist that optimize network resource usage to enhance network scalability, but these are not within the scope of this work.

We adopt a {\em piggyback} mechanism to disseminate a newly computed $\#TX$ for a certain control loop (flow) generated by the holistic controller. A network manager, which is co-located with a holistic controller, can incorporate the updated $\#TX$ and the $\#Flow$ into all the periodic actuation packets in this control loop.
Hence, all nodes along the actuation routes of certain control loops can receive this update.
This piggyback mechanism helps reduce communication cost by utilizing existing periodic communication.
Once a node receives a $\#TX$ switch command, it will apply a new $\#TX$ at the beginning of next superframe.
However, if a node fails to receive the command due to packet loss, it will continue to use the current $\#TX$ until any actuation packet is received.
Therefore, it is possible that, at the same time, different nodes along the route of a flow may use different $\#TX$.

Nevertheless, it is still possible for nodes to eventually receive the update since they are always scheduled to communicate at the slot allocated for their first transmission attempts over a link.
For example, in Table~\ref{table:piggyback}, the transmissions colored red represent the current schedule for flow F1 (A $\rightarrow$ B $\rightarrow$ C) and F2 (A $\rightarrow$ B $\rightarrow$ D), when a controller issues a command to update \#TX(F1) from 1 to 2, and \#TX(F2) $= 3$. In this example, the update reaches node A and B at time slot 1, but fails to arrive at C at time slot 4 due to packet loss. Hence, A and B will switch to 2TX, while C remains to use 1TX. Although B and C use different $\#TX$, it is still possible for C to receive actuation and mode switch commands from B in the following superframe since B and C will always communicate at slot 4 regardless of $\#TX$. The $\#TX$ of $F2$ is kept unaltered during the process since there is no $\#TX$ update for F2.

\begin{table}[tp]
\small
  \caption{Piggyback Algorithm Superframe Examples}
  \label{table:piggyback}
  \resizebox{\columnwidth}{!}{%
    \begin{tabular}{c c c c c c c c c c c c c}
      \hline
      $\#TX$ & Slot 1 & Slot 2 & Slot 3 & Slot 4 & Slot 5 & Slot 6&Slot 7&Slot 8&Slot 9&Slot 10&Slot 11&Slot12\\
      \hline
      $1$ & A$\rightarrow$B & & & \color{red}{B$\rightarrow$C} & & & A$\rightarrow$B&&& B$\rightarrow$D&&\\
      $2$ & \color{red}{A$\rightarrow$B} & \color{red}{A$\rightarrow$B} & & B$\rightarrow$C & B$\rightarrow$C &&A$\rightarrow$B&A$\rightarrow$B&&B$\rightarrow$D&B$\rightarrow$D \\
      $3$ & A$\rightarrow$B & A$\rightarrow$B & A$\rightarrow$B & B$\rightarrow$C& B$\rightarrow$C & B$\rightarrow$C&\color{red}{A$\rightarrow$B}&\color{red}{A$\rightarrow$B}&\color{red}{A$\rightarrow$B}&\color{red}{B$\rightarrow$D}&\color{red}{B$\rightarrow$D}&\color{red}{B$\rightarrow$D} \\
      \hline
    \end{tabular}%
  }
\end{table}

It is worth noting that our paper mainly discusses WSANs that are revivable under moderate cyber and physical attacks.
This is why we set $\rho_{\min}$ to guarantee relatively high PDRs of wireless networks in Section.~\ref{sec:retx_adapt}.
Under extreme conditions, a larger portion of piggyback packets may be lost.
In this case, sending commands to switch modes by broadcasting or flooding might be a better solution.
\section{Case Study}
\label{sec:results}

In this section, we introduce a systematic case study for our holistic WSAN controller.
On the physical side, we use two 5-state linear time-invariant plants that share the same WSAN as representative of an industrial process systems. \reply{One of the plants is open-loop unstable, and the other is open-loop stable.}
On the cyber side, we simulate a 16-node WSAN using the WCPS simulator~\cite{wcps_web}, seeded with real-world wireless traces as explained in~\cite{li_iccps13, li2015incorporating, li2016wireless}.
WCPS works by fully integrating Simulink~\cite{Simulink} and TOSSIM~\cite{Lee07:tossim-noise}.
Besides studying the behavior of the algorithms presented in Sections.~\ref{sec:controllerdesign} and~\ref{sec:wsn}, we also test the performance of our system under both cyber and physical disturbances. \reply{ We will present the case study of the open-loop unstable plant in the first three sections. Then, we will include an open-loop stable plant that shares WSAN with that open-loop unstable plant to form a multi-loop simulation in Section.~\ref{sec:multi}.}

\subsection{Experimental Setting}
\label{sec:setting}
The plant is defined in~\eqref{eq:diffeq}, with the following parameters:
\begin{equation}
  %\begin{gathered}
    A =
    \begin{bmatrix}
      0.717 & -1.367 & -0.218 & -0.867 & -0.899 \\
      0.078 & 0.209 & -0.105 & -0.511 & -0.466\\
      0.122 & 0.891 & 1.305 & 0.511 & 0.666\\
      -0.243 & -1.383 & -0.610 & -0.023 & -0.932\\
      0.122  & 0.871 & 0.165& 0.712 & 1.466
    \end{bmatrix},\quad
    B =
    \begin{bmatrix}
      0.083\\ 0.056\\ -0.056\\ 0.111\\ -0.056
    \end{bmatrix},\quad\text{and}\quad
    C =
    \begin{bmatrix}
      0\\ 1\\ 1\\ 0\\ 0
    \end{bmatrix}^\tp.
  % \end{gathered}
\end{equation}
Note that the set of eigenvalues of $A$ is equal to $\set{0.413, 0.563, 0.624, 1.068, 1.006}$.
Since there are two eigenvalues outside the unit circle, the plant is open-loop unstable.
% According to Lemma 1 in Sec.~\ref{sec:cost_lyapunov}, we can easily find MPC parameters that satisfy Lyapunov function is Lyapunov function without network communication.

% \subsubsection{MPC Parameters}
The parameters of the MPC scheme in~\eqref{eq:mpc} are chosen as follows:
\begin{equation}
  \begin{gathered}
    Q =
    \begin{bmatrix}
      1 & 0 & 0 & 0 & 0 \\
      0 & 1 & 0 & 0 & 0\\
      0 & 0 & 1 & 0 & 0\\
      0 & 0 & 0 & 5 & 0\\
      0 & 0 & 0 & 0 & 1
    \end{bmatrix},\quad
    S =
    \begin{bmatrix}
      439 & 518.37 & 942.57 & 220.13 & 588.16 \\
      518.37& 633.85 & 1124.8 & 266.03 & 708.54\\
      942.57 & 1124.8 & 2045.3 & 483.1 & 1274.2\\
      220.13 & 266.03 & 483.1 & 129.01& 309.01\\
      588.16 & 708.54 & 1274.2 & 309.01 & 809.37
    \end{bmatrix},\quad
    \bar{x} =
    \begin{bmatrix}
      0.289\\ 1.735\\ 0.578\\ -1.157\\ -1.735
    \end{bmatrix},\\
    K =
    \begin{bmatrix}
      -20.49 & -20.48 & -42.66 & -11.24 & -26.38
    \end{bmatrix}.
  \end{gathered}
\end{equation}
Also, $\bar{u} = 0.2$, $R = 0.08$, $N = 50$, ${\mathcal X} = \R^5$, and ${\mathcal U} = \set{\abs{u} \leq 40}$, where $S$, $K$, and ${\mathcal X}_f$ are computed as described in Lemma~\ref{lemma:lyap} with $\beta = 1.1$.

The parameters of the Alg.~\ref{alg:holistic} are
$\lambda = 0.1$, $\tau_1 = 5$, $\tau_2 = 500$, $\gamma = 16$, $\alpha_1 = 1.977$, $\alpha_2 = 8.223 \cdot 10^6$, and $\alpha_3 = 1$,
where $\set{\alpha_i}_{i=1}^3$ are computed as described in Lemmas~\ref{lemma:alpha_1} to~\ref{lemma:alpha_3}.

\reply{We use WCPS~\cite{wcps_web}, which was developed by the Cyber-Physical Systems Laboratory at Washington University, as a platform for holistic wireless control system simulations.
WCPS employs a federated architecture that integrates Simulink~\cite{Simulink} for simulating the physical system dynamics and controllers, and TOSSIM~\cite{Lee07:tossim-noise} for simulating WSANs. Both Simulink and TOSSIM are among the leading simulators in the control and networking communities, respectively, but with little interaction between them in the past.
Furthermore, the WirelessHART network protocol stack is implemented as part of WCPS 2.0~\cite{li2015incorporating}, including protocols at the routing (Source and Graph Routing) and MAC layers.
To the best of our knowledge, WCPS is the first simulator that supports all these WirelessHART features with a realistic wireless link model.

In this paper, we have incorporated the new $\#TX$ reconfiguration mechanism presented in Section.~\ref{sec:wsn} into WCPS.
Additionally, we have included multi-rate functionality into the simulation.
In reality, the industrial plant models are mostly continuous or running at very high frequencies. However, the wireless  communication and controller execute at a relatively low frequency because of the network latency and computational latency.
Therefore, we incorporate multi-rate functionality in WCPS, which simulates plant, wireless network, and controller in multiple rates.
Additionally, we have considered the latency of each module in the wireless control loop.
Since sensing measurements and control commands are sent via WSAN periodically, time-driven scheduling is adopted, as is shown in Fig.~\ref{fig:sche}. We determine the length of each event based on its worst-case execution time.}
%WCPS has already proven to be an valuable tool for CPS research on wireless control systems. Using WCPS, we have developed case studies that simulate wireless control systems for civil infrastructure [67] and process plants [65,66], allowing us to motivate our cyber-physical co-design of wireless communication protocols and control algorithms ideas.}
% \subsubsection{Wireless Settings}
%Network manager will change ReTX number of the downstreams in WSAN in the beginning of the third superframe after the superframe that it receives ReTX switch command from holistic controller.}

We simulate a wireless network consisting of 16~nodes, where each simulation is based upon data traces collected from our WSAN testbed at Washington University.
The collected information includes \emph{received signal strength indicator} (RSSI) and electromagnetic noise, which are used as inputs for the wireless link model in WCPS.
The WSAN in this study consists of 6~sensing flows and 2~actuation flows.
%but we only use one of each to control our plant.
A sensing flow delivers sensing data from a sensor node to the controller, while an actuation flow delivers control commands to an actuator.
\reply{We only use two pairs of sensing and actuation flows for two control loop.} \reply{The extra sensing flows are reserved for redundant measurements, as it is commonly designed in industrial scenarios.}
The maximum distance from a sensor to an actuator is 4~hops.

\reply{Since WirelessHART employs the Time Slotted Channel Hopping technology (TSCH) MAC, the superframe length of this WSAN is $140\, \text{ms}$, i.e.\ fixed 14~time slots. The WSAN sensing delay ($T_S$) and actuation delay ($T_A$) are set to $60$ ms and $80$ ms, respectively. The worst-case execution times of the controller and the state observer are  $30$ ms and $0.2$ ms among $10000$ operations, respectively. Hence, we set the computation delay of controller ($T_C$) to $30$ ms, and Kalman Filter delay ($T_{KF}$) to $10$ ms, \reply{because the granularity of our simulation is $10\, \text{ms}$.}}
\reply{Therefore, we set the frequencies of the WSAN, the MPC controller, and the KF state observer in our simulations to $5\, \text{Hz}$. And the frequencies of the plants are set to $100\, \text{Hz}$. Note that the control command executed at time $T_k$ is based on the sensor measurements at $T_{k-1}$, as is shown in Fig.~\ref{fig:sche}. Since we are using MPC, at time $T_k$, we use $\setn{u_{j,k}^*}_{j=k-N}^{k-1}$. In this way, we mitigate the delay  of all the modules and the effects of the actuation packet loss in the wireless control loop.}
%
%, which filters out the impact of the control frequency while allowing us to focus on studying the impact of retransmissions.

\reply{Because of the asymmetric nature of sensing and actuation sides}~\cite{li2016wireless}, we adopt an \emph{asymmetric} \reply{scheduling} strategy.
For sensing flows, we do not provide any packet retransmission since the state observer mitigates the impact of packet loss, as explained in Section.~\ref{sec:wcs}.
For actuation flows, we set the maximum $\#TX$ to $\eta_{\max} = 4$ (As explained in Section.~\ref{sec:WNM}, there is a diminishing return of PDR improvements as $\#TX$ increases. Therefore, we set $\eta_{\max} = 4$ to efficiently improve PDR at reasonable cost.), and the minimum packet delivery ratio to $\rho_{\min} = 0.7$.
%We refer interested users to~\cite{li2016wireless} for more details of the asymmetric idea between sensing and actuation.

%The superframe length of 0 ReTX in sensing side and 3 ReTX in actuation side is $140ms$. Thus the sampling time of wireless control system is set as $5Hz$.
%We treat the transient time of ReTX counts changes in WSAN as a fixed delay of two superframes in our current experimental settings.
%Network manager will change ReTX number of the downstreams in WSAN in the beginning of the third superframe after the superframe that it receives ReTX switch command from holistic controller.
%\subsubsection{Compare with baseline}
%Noise level is not change.

%%%%%%% HG: I'm here

\begin{figure}[tp]
  \centering
  \includegraphics[width=0.8\columnwidth]{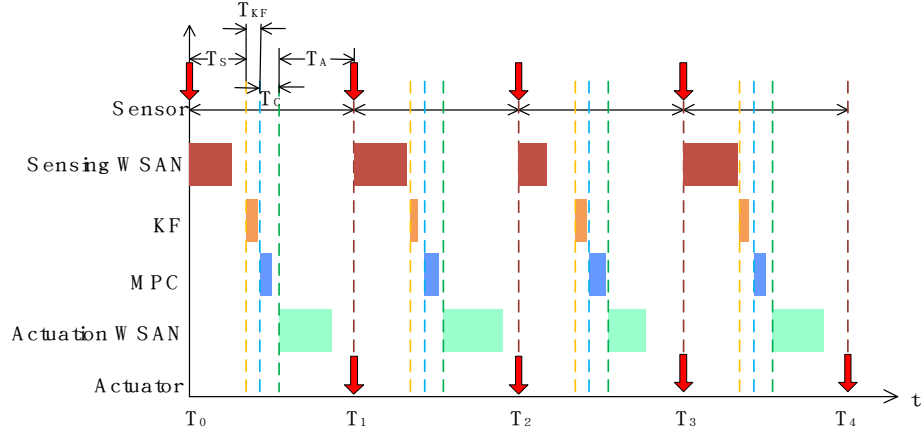}
  \caption{
    Time-driven scheduling of wireless control system
  }
  \label{fig:sche}
\end{figure}

\subsection{Simulation Under Wireless Interference}
\label{sec:eval_cyber}

\begin{figure}[tp]
  \centering
  \includegraphics[width=0.9\columnwidth]{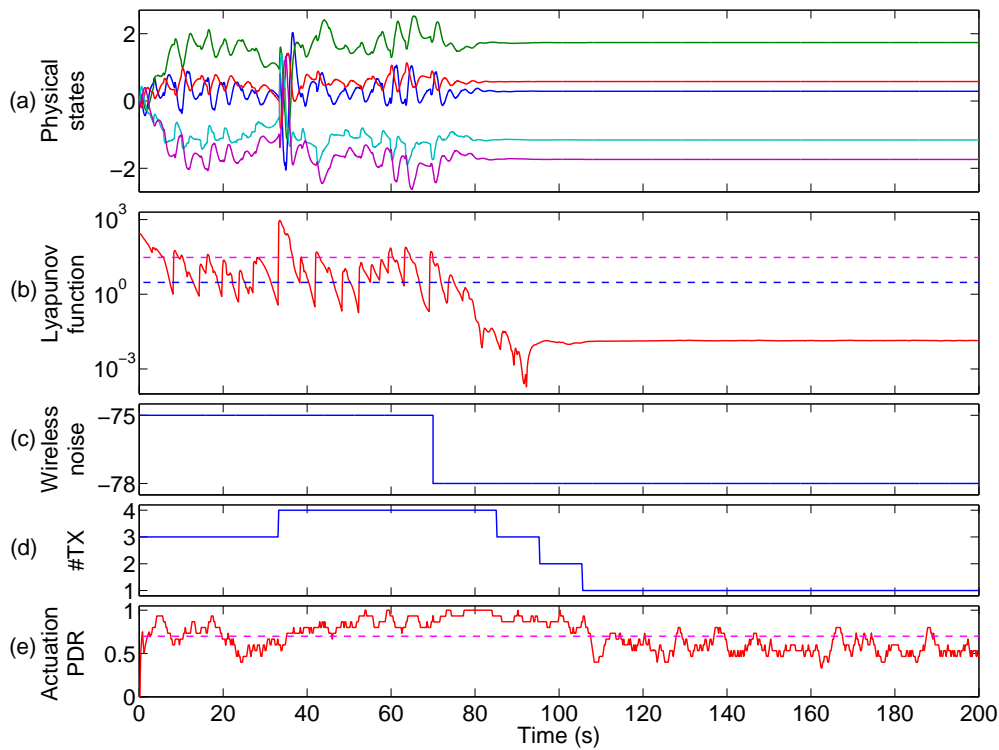}
  \caption{%
    Holistic controller simulation under wireless interference.
    The wireless background noise is higher for the first $70\, \text{s}$, resulting in a  $\#TX$ adaptation.
  }
  \label{fig:cymc}
\end{figure}

We consider a disturbance on the wireless network caused by an increase in background noise on all channels, which is common when the wireless sensor network is under wide-band continuous jamming attack.
In this case, channel hopping and channel blacklisting functionality of WirelessHART standard might fail to resist such attack.
Our simulation considers an increased value for the background noise for the first $70\, \text{s}$ at $-75\, \text{dBm}$, and a reduction to a standard value for the rest of the simulation at  $-78$dbm, as shown in Fig.~\ref{fig:cymc}c.
Note that the physical plant is unstable and in a transient state for the first $70\, \text{s}$; hence a low actuation packet delivery ratio should likely lead to diverging trajectories.

In this simulation (Fig.~\ref{fig:cymc}), there are two interesting observations.
First, the \#TX is increased twice, first at $t = 0\, \text{s}$, and then at $t \approx 33.2\, \text{s}$.
The first increase is due to a violation of the safety bound condition in Lemma~\ref{lemma:safety_bnd}, while the second is due to a violation of the worst-case Lyapunov evolution in Lemma~\ref{lemma:lyap_worst_case}.
By time $t \approx 80\, \text{s}$, the physical plant is well below the safety bound, implying that Alg.~\ref{alg:holistic} successfully adapted to the higher wireless background noise.
Second, once the background noise is reduced, our algorithm slowly decreases the $\#TX$, and finally stabilizing at a point where the physical plant is stable even under a significant number of actuation packet drops.
Note that once the physical plant reaches a steady state, the optimal input sent through the network is almost constant, with $u_{t,j}^* \approx \bar{u}$ for each $j$.
In this case, the actuation buffer, explained in Section.~\ref{sec:wcs} and Fig.~\ref{fig:wsn}, almost completely mitigates the information lost in the wireless network due to lower $\#TX$.

\begin{figure}[tp]
	\centering
  \includegraphics[width=0.75\columnwidth]{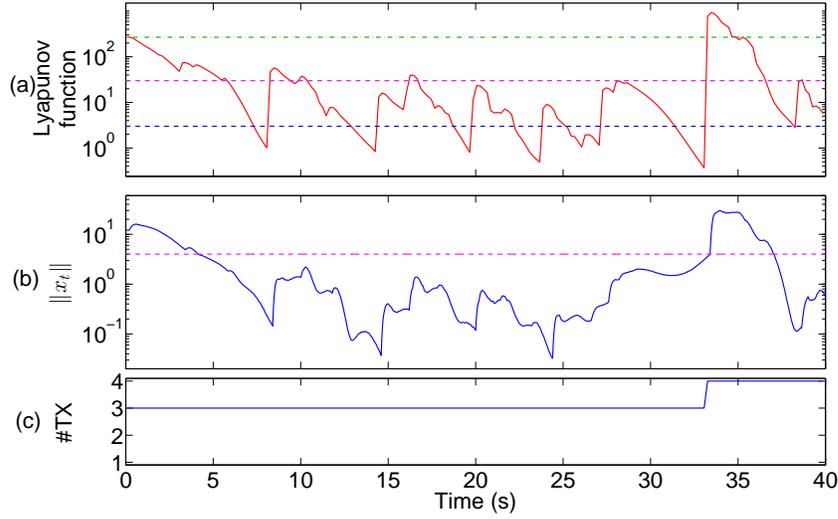}%
  \caption{
    Zoom in of the simulation in Fig.~\ref{fig:cymc} for the time interval $[0,40]\, \text{s}$.
  }
  \label{fig:detail}
\end{figure}

Fig~\ref{fig:detail} is a zoom-in view of the first 40 seconds of the simulation in Fig.~\ref{fig:cymc}.
Fig.~\ref{fig:detail}a and Fig.~\ref{fig:detail}b validate our result in Lemma~\ref{lemma:safety_bnd} of using the Lyapunov function as a simple and practical test for safety.
Also, the top green dashed line in Fig.~\ref{fig:detail}a is the bound induced by the worst-case evolution of the Lyapunov function, as explained in Lemma~\ref{lemma:lyap_worst_case}, which results in an increase in the  $\#TX$ when violated.

We compare \reply{cyber and physical properties of wireless control system} when using different algorithms to adapt the network configuration.
We compare our algorithm, HC (Alg.~\ref{alg:holistic}), against: (1) an adaptation algorithm based only on two PDR thresholds, namly, $\#TX$ increase threshold ($80\%$) and $\#TX$ decrease threshold ($90\%$). We regard this algorithm as pure network solution, denoted PN; (2) constant $\#TX$, denoted 2, 3, and 4. \reply{If an actuation packet adopts 1 TX, the average packet delivery ratio is around $23\%$ under noise level of $-75$ dbm, and around $53\%$ even under a noise level of $-78$ dbm, which will damage system performances. Thus, we set at least 1 retransmission to actuation packets.}
Each algorithm is simulated 40 times, and unstable executions are discarded to avoid distorting the average computations.
It is worth noting that fixed $2 TX$s results in 10~unstable simulations, and both PN and fixed 3, 4 $TX$s result in 1~unstable simulation, while HC stabilizes all the executions.

\begin{figure}[tp]
\centering
\begin{minipage}[b]{0.4\textwidth}
\centering
  \includegraphics[width=0.83\columnwidth]{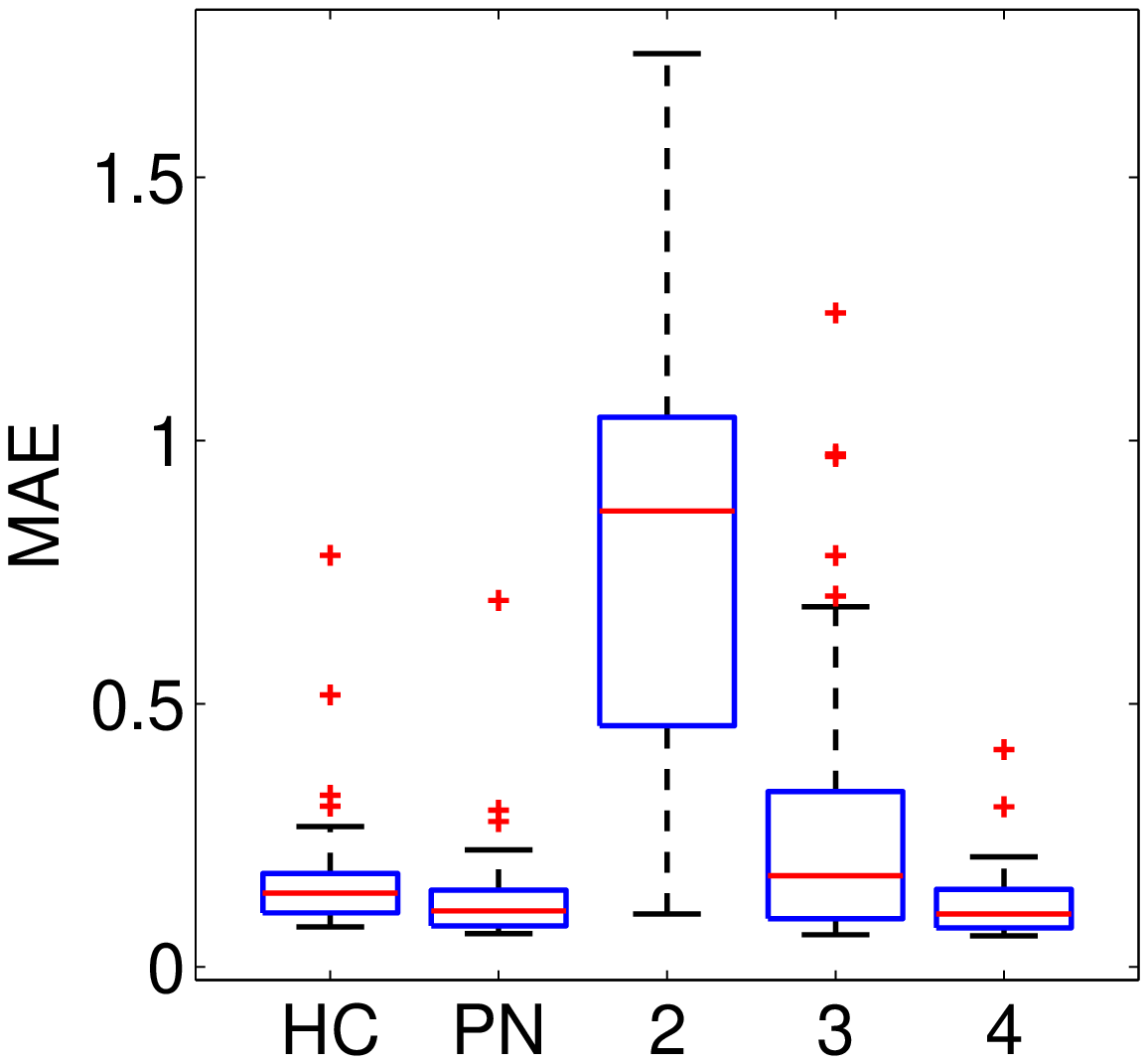}
  \subcaption{Mean Absolute Error (MAE)}
 \label{fig:cyattack1}
 \end{minipage}
\begin{minipage}[b]{0.4\textwidth}
\centering
  \includegraphics[width=0.78\columnwidth]{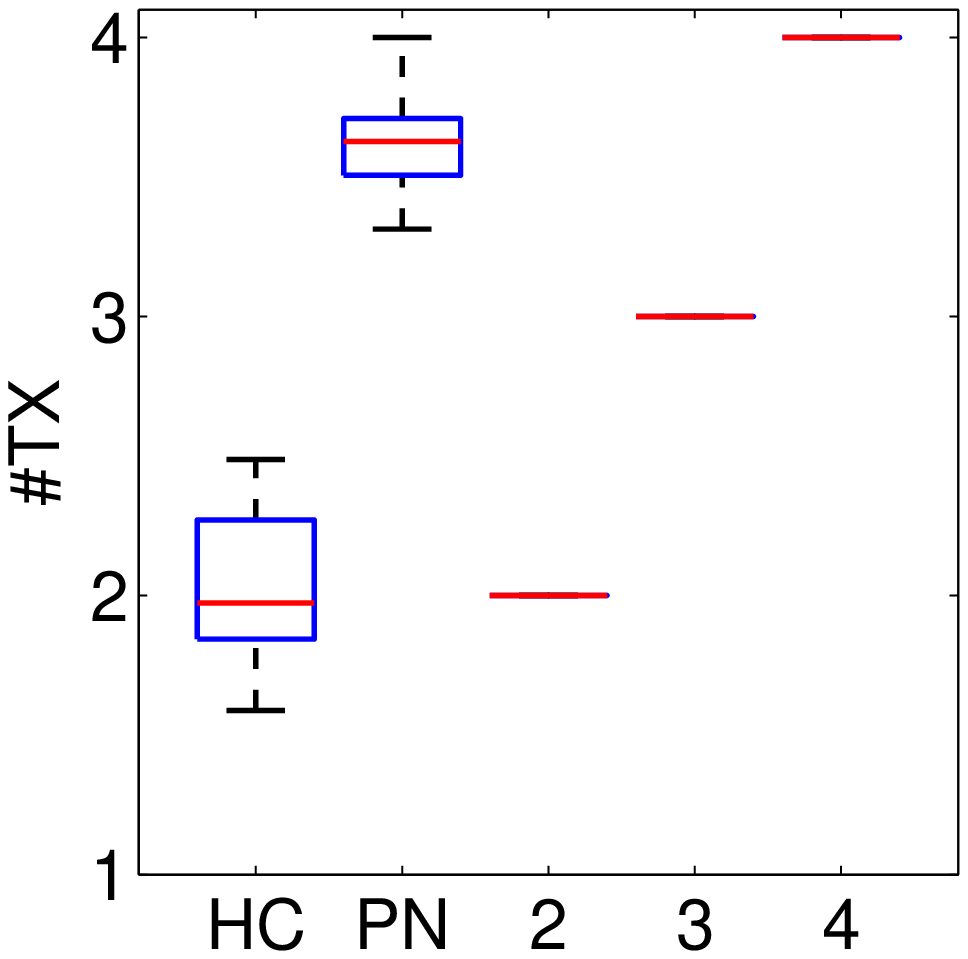}
    \subcaption{Average number of scheduled TXs}
 \label{fig:cyattack2}
 \end{minipage}
\begin{minipage}[b]{0.4\textwidth}
\centering
  \includegraphics[width=0.85\columnwidth]{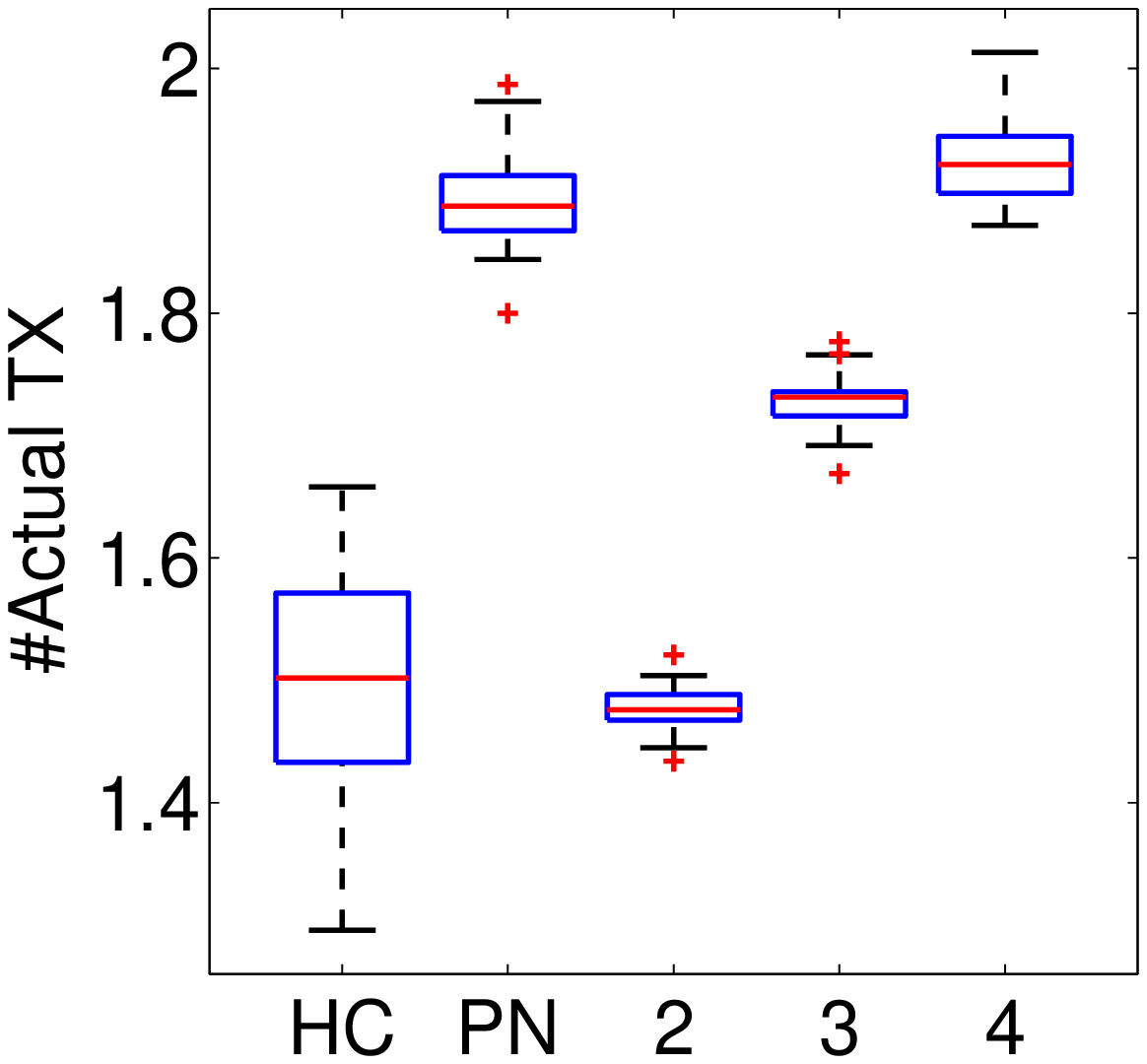}
    \subcaption{Average number of actual TXs per packet}
 \label{fig:cyattack3}
 \end{minipage}
\begin{minipage}[b]{0.4\textwidth}
\centering
   \includegraphics[width=0.85\columnwidth]{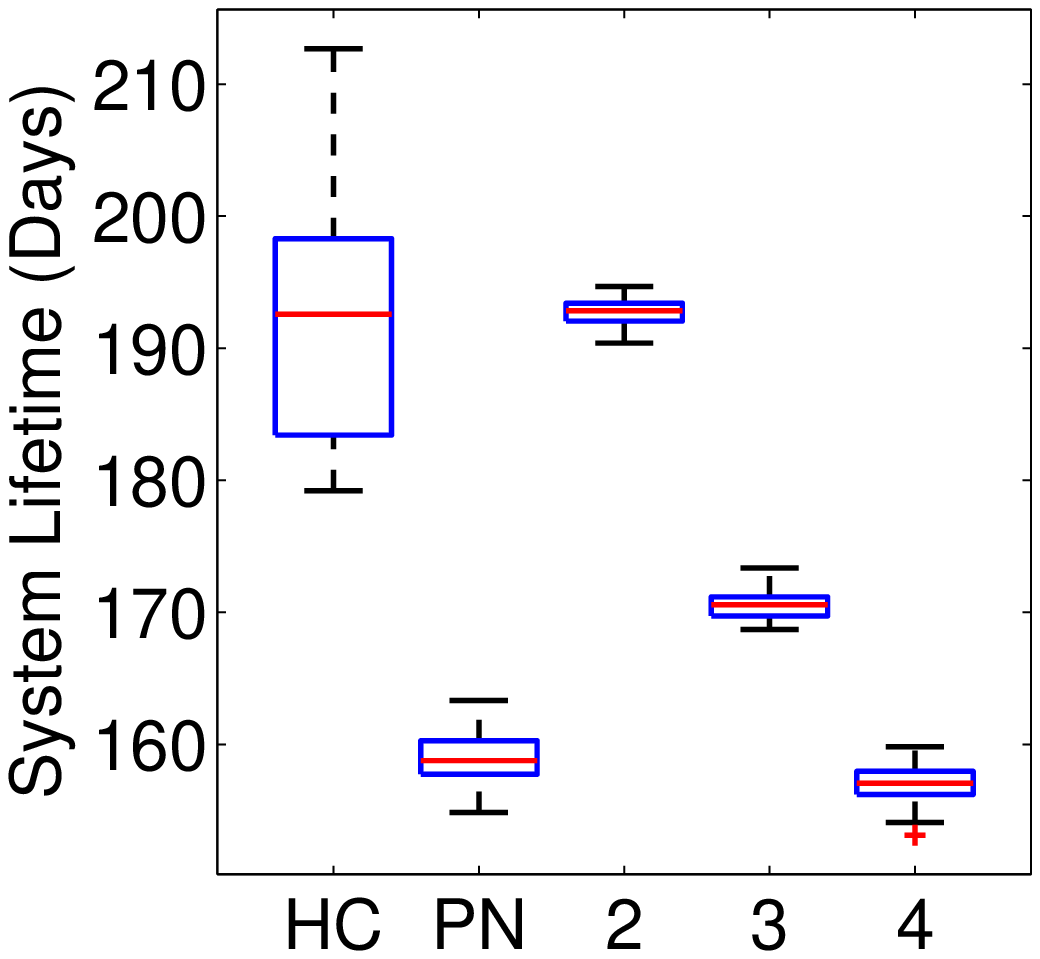}
     \subcaption{Battery life}
 \label{fig:cyattack4}
 \end{minipage}
  \caption{
    (a) MAE, (b) average $\#TX$, (c) average $\#Actual\ TX$, and (d) battery life for different $\#TX$ adaptation algorithms under wireless interference.
    The algorithms are our holistic controller (HC), pure network adaptation (PN), and constant transmission number equal to 2, 3, and 4, respectively.
  }
  \label{fig:cyattack}
\end{figure}

Fig.~\ref{fig:cyattack} compares (a) the Mean Absolute Error (MAE)  of the physical states, (b) the number of scheduled transmissions ($\#TX$ for short in the rest of the paper), (c) the average number of actual transmissions per actuation packet ($\#Actual\ TX$ for short in the rest of the paper), and (d) the system lifetime with various wireless network configuration methods.
\reply{In reality, $\#Actual\ TX$ is often less than scheduled $\#TX$. For one reason, if the sender gets the Acknowledgement from the receiver at the first try, it will not send an extra time since the packet has already been received. For another reason, if the packet is lost in previous hops, the sender has no packet to send. The $\#Actual\ TX$ is one determinant of battery life.}
\reply{We assume wireless motes use AA batteries, the capacity of which is $8640$ J. We define the system life time as the battery life of the most consuming node in the network, and calculate battery life based on the method in~\cite{wu2015conflict}.}
According to Fig.~\ref{fig:cyattack1} and Fig.~\ref{fig:cyattack2}, while our HC algorithm has a comparable MAE to PN and 4 $TX$s, its average scheduled $\#TX$ is around $2$.
\reply{Fig.~\ref{fig:act_send_cy} shows the ratio for $\#Actual\ TX$ with various wireless network configuration methods.
HC has the highest ratio of $0$ and $1$ actual $TX$ per packet, since the system performances are sometimes acceptable even though PDR is not high enough.
At the same time, HC can also adjust the $\#Actual\ TX$ to $4$ to guarantee control performances when needed.}
\reply{Fig.~\ref{fig:cyattack3} and Fig.~\ref{fig:cyattack4} shows that the $\#Actual\ TX$ and the system lifetime of our HC algorithm are also similar to 2$TX$. Furthermore, HC extends the system lifetime for more than one month compared with PN.}

\begin{figure}[tp]
\centering
  \includegraphics[width=0.5\columnwidth]{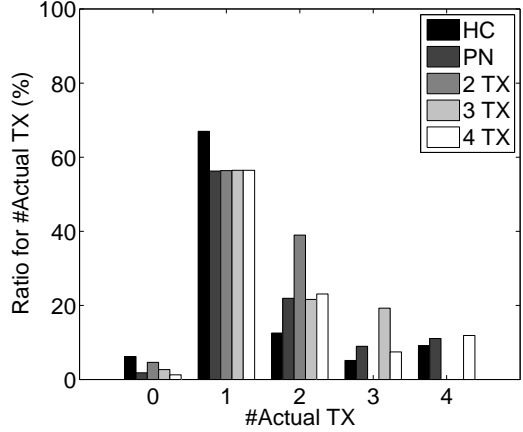}
  \caption{%
Ratio for $\#Actual\ TX$ under wireless interference
  }
  \label{fig:act_send_cy}
\end{figure}

\begin{figure}[tp]
\centering
\begin{minipage}[b]{0.45\textwidth}
\centering
  \includegraphics[width=1\columnwidth]{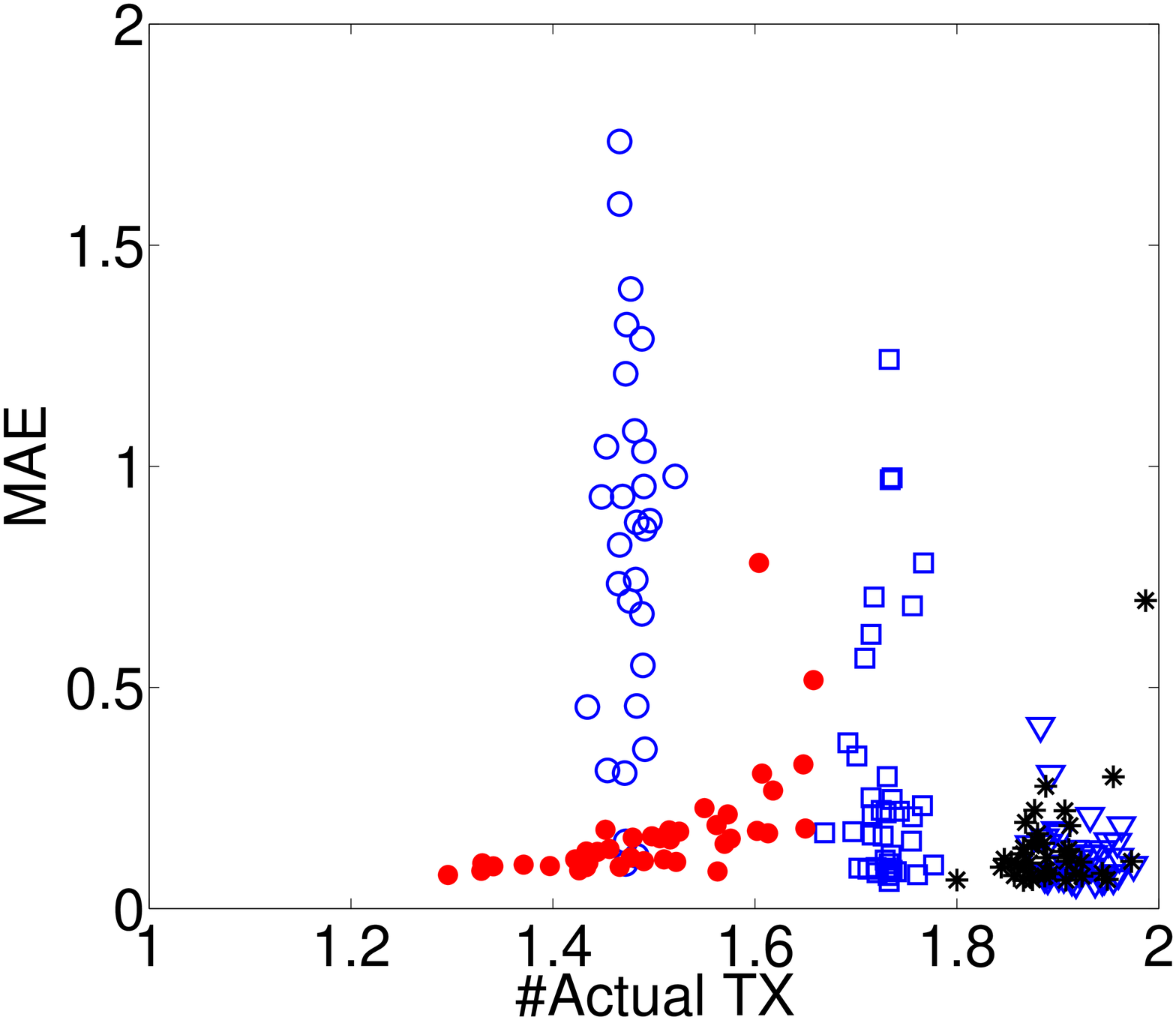}
    \subcaption{Relationship between MAE and $\#Actual\ TX$}
 \label{fig:spread_cy1}
 \end{minipage}
 \begin{minipage}[b]{0.45\textwidth}
\centering
  \includegraphics[width=1\columnwidth]{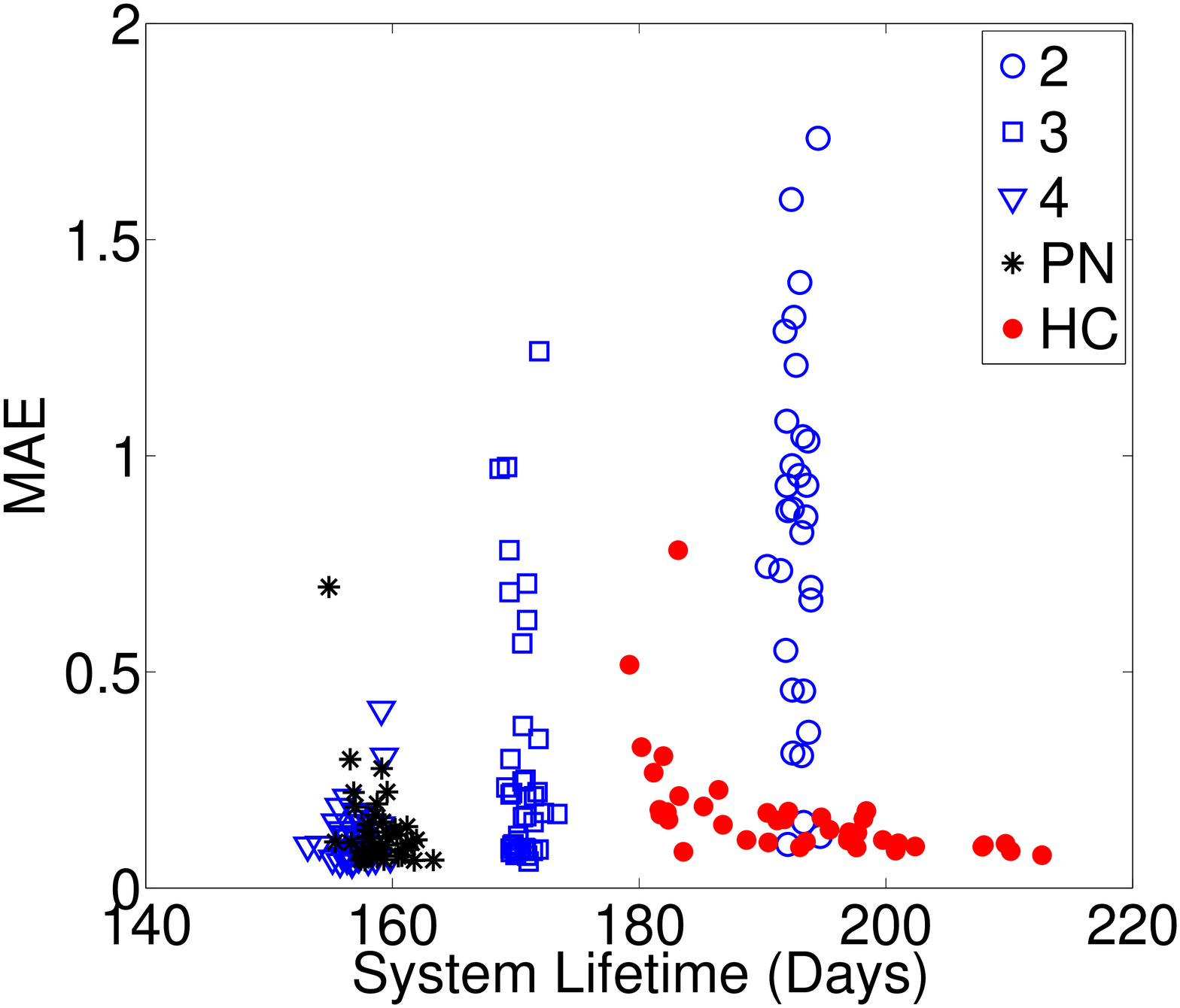}
    \subcaption{Relationship between MAE and battery life}
 \label{fig:spread_cy2}
 \end{minipage}
  \caption{%
Relationship between MAE and $\#Actual\ TX$ as well as system lifetime for different $\#TX$ adaptation algorithms under wireless interference. The algorithms are HC (red), PN (black) and fixed $\#TX$ (blue), respectively.
  }
  \label{fig:spread_cy}
\end{figure}

\reply{Fig.~\ref{fig:spread_cy} compares the relationship between MAE and the $\#Actual\ TX$, as well as the system lifetimes for different network configuration algorithms. HC's data points concentrate in the bottom left area of Fig.~\ref{fig:spread_cy1}, which indicates that this algorithm acquires smaller MAE with a fewer $\#Actual\ TX$. The simultaneous increase of both MAE and $\#Actual\ TX$ can be explained by the intuition of HC that poorer system performances will cause an increase of $\#Actual\ TX$; on the other hand, no extra transmissions will be adopted when the physical system is in a good condition. This trend indicates that network resources are adapted well based on the status of the physical plant. The same facts are reflected in Fig.~\ref{fig:spread_cy2}}.

%\note{Address there can be new scheduling algorithm and address scalability in order to better manage the network resources.}

\subsection{Simulation Under Sensor Disturbance}
\label{sec:eval_phy}

\begin{figure}[tp]
  \centering
  \includegraphics[width=0.88\columnwidth]{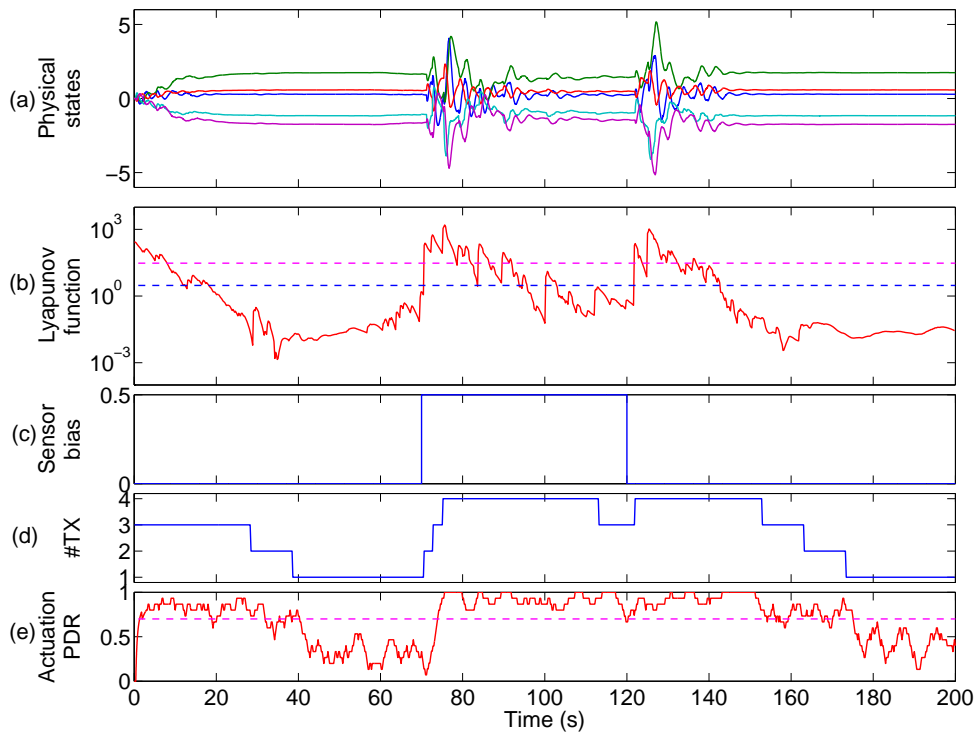}
  \caption{Statistics under physical disturbance}
  \label{fig:phymc}
\end{figure}

We now consider an external bias applied to the physical plant sensor, modeling either a malfunction or an adversarial attack.
That is, we consider a system where the output equation~\eqref{eq:diffeq} becomes $y_t = C\, x_t + \omega_t$ for $\omega_t \in \R$.
In this simulation, we keep the wireless background noise constant at $-76\, \text{dBm}$.

%\note{when there is sensor bias, the control command is more valuable than others, so more transmissions should be used. The ideas of packets are not of same importance.}

Fig.~\ref{fig:phymc} shows the result of our simulation, where a disturbance is applied in the interval $[70,120]\, \text{s}$, with $\omega_t = 0.5$.
Again, we focus our attention on two outcomes of this simulation.
First, twice our algorithm lowers the $\#TX$ to $1$, while maintaining the stability of the physical plant.
Note that the actuation packet delivery ratio in both situations is well below standard acceptable values, yet the use of an actuation buffer mitigates any significant impact that the information loss has in the physical plant.
Second, the plant remains stable under the application and later release of the physical disturbance, both times thanks to an increase in the $\#TX$.
This phenomenon validates our cyber-physical approach, where a controller that is designed to mitigate imperfections in the communication channels, together with a network manager that is aware of the performance of the physical plant, jointly result in an efficient and safe control system.

\begin{figure}[tp]
\begin{minipage}[b]{0.4\textwidth}
\centering
  \includegraphics[width=0.83\columnwidth]{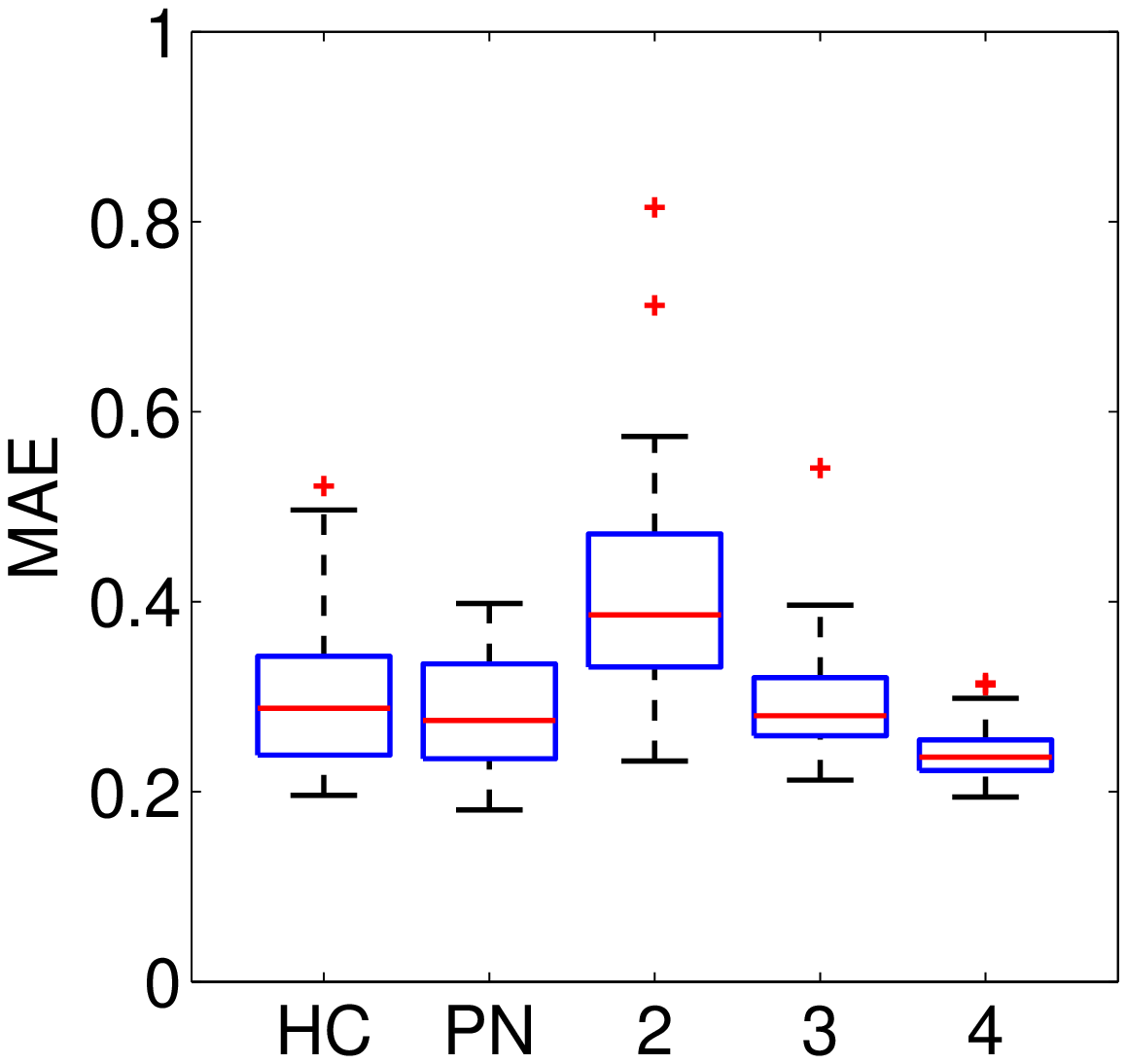}
 \subcaption{Mean Absolute Error (MAE)}
 \label{fig:phyattack1}
 \end{minipage}
 \begin{minipage}[b]{0.4\textwidth}
  \centering
  \includegraphics[width=0.8\columnwidth]{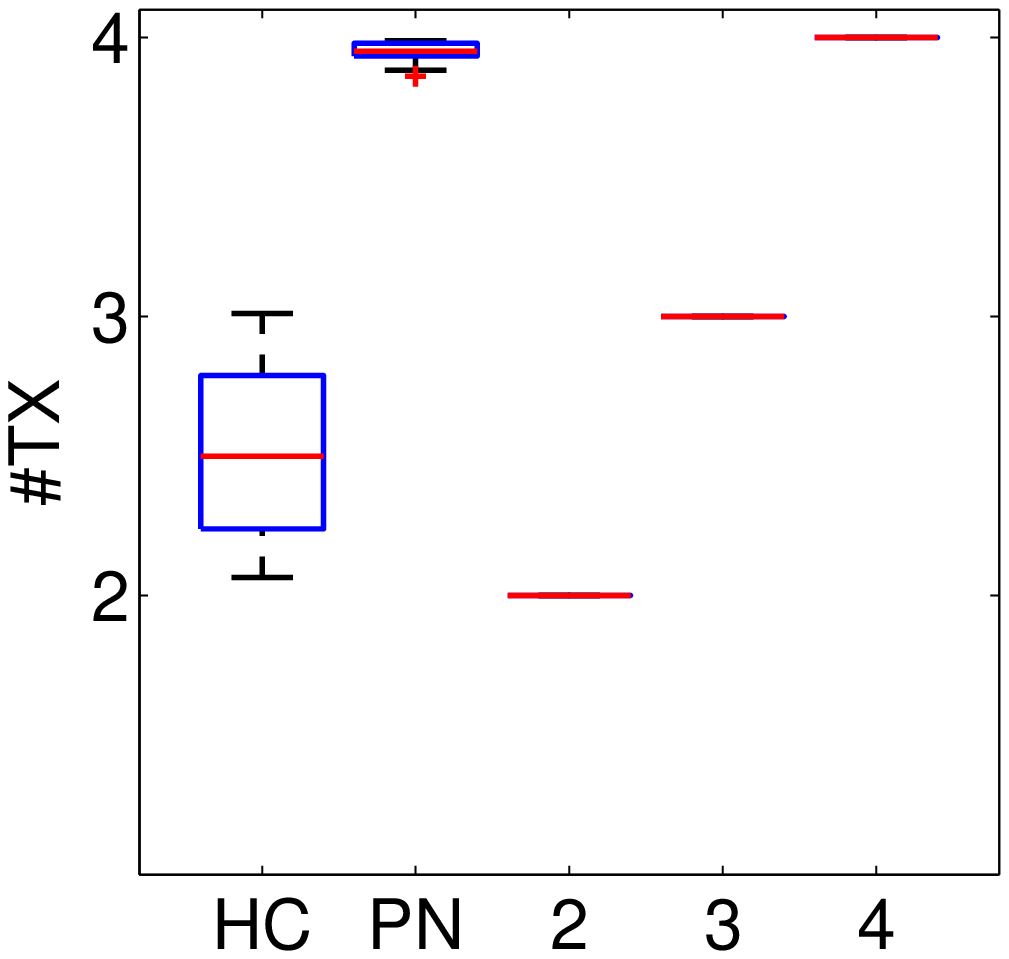}
   \subcaption{Average number of scheduled TXs}
    \label{fig:phyattack2}
 \end{minipage}\\
  \centering
\begin{minipage}[b]{0.4\textwidth}
\centering
  \includegraphics[width=0.85\columnwidth]{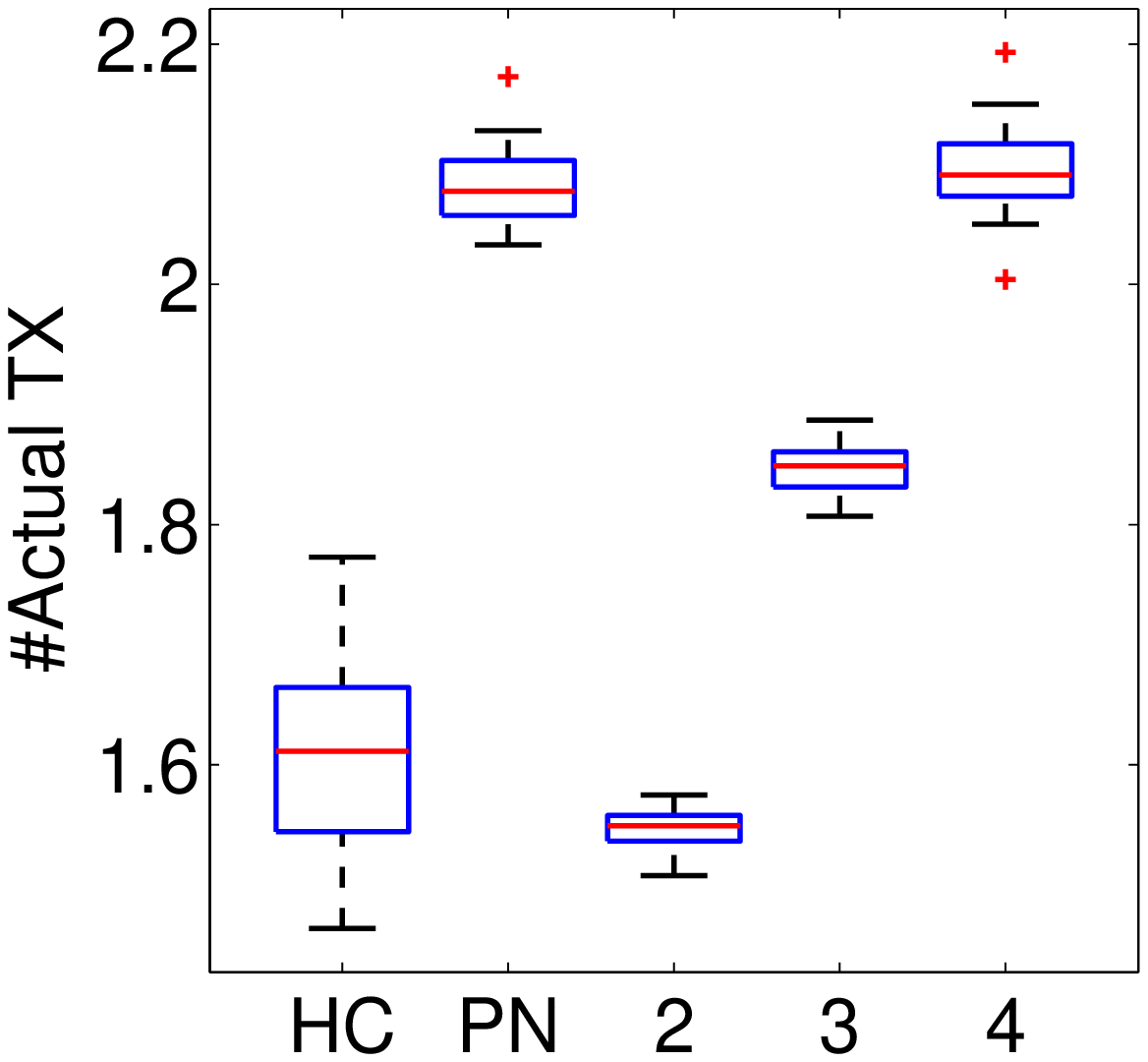}
   \subcaption{Average number of actual TXs per packet}
    \label{fig:phyattack3}
 \end{minipage}
\begin{minipage}[b]{0.4\textwidth}
 \centering
  \includegraphics[width=0.85\columnwidth]{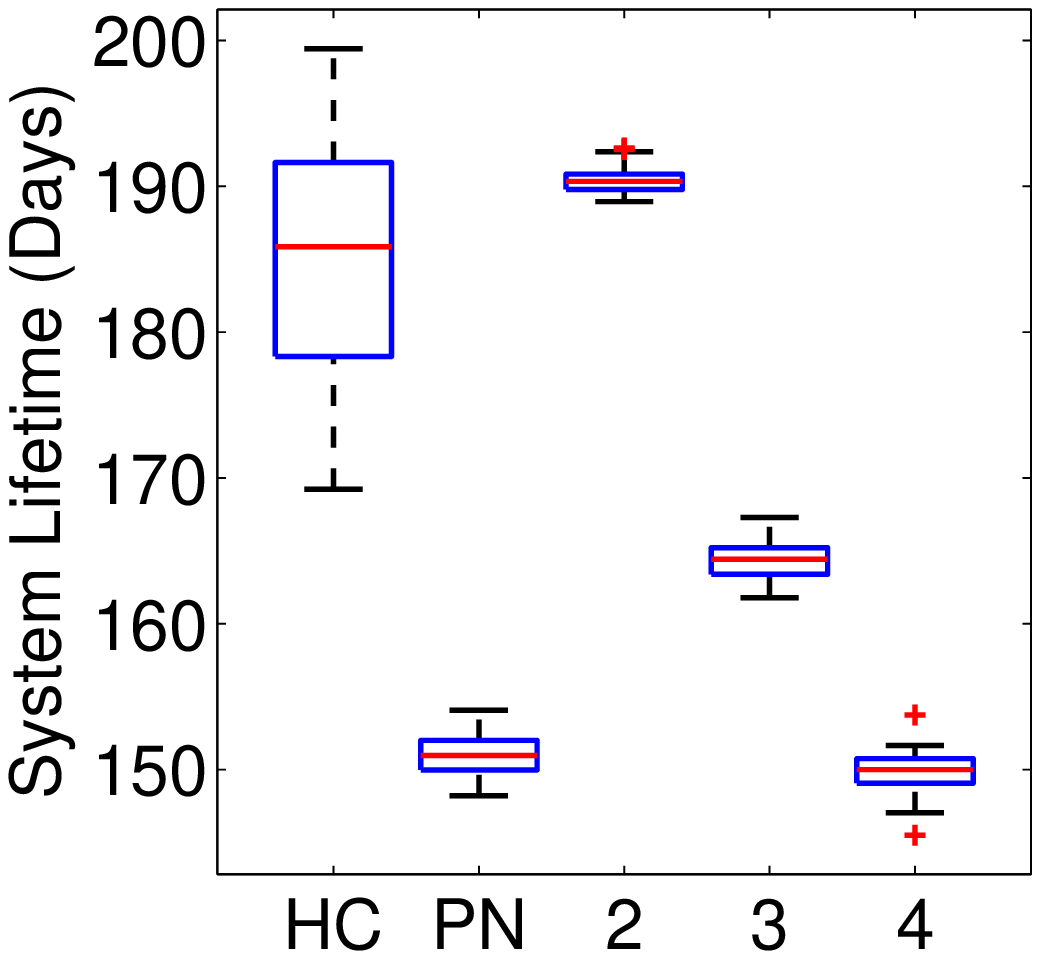}
   \subcaption{Battery life}
    \label{fig:phyattack4}
 \end{minipage}
  \caption{%
    (a) MAE, (b) average $\#TX$, (c) average $\#Actual\ TX$, and (d) battery life for different $\#TX$ adaptation algorithms under a sensor bias disturbance.
    The algorithms are our holistic controller (HC), pure network adaptation (PN), and constant transmission number equal to 2, 3, and 4, respectively.
  }
	\label{fig:phyattack}
\end{figure}

Fig.~\ref{fig:phyattack} is analogous to Fig.~\ref{fig:cyattack}, but it considers the physical disturbance described above.
Under sensor disturbance, fixed $2 TX$s results in 1~unstable simulation among 40, whereas none of the other algorithms produce unstable simulations.
Yet, HC results in executions with MAE comparable to constantly having a $\#TX$ equal to 3, as shown in Fig.~\ref{fig:phyattack1}, while achieving an average $\#TX$ of $2.5$, as shown in Fig.~\ref{fig:phyattack2}.
\reply{Fig.~\ref{fig:act_send_phy} shows the ratio for the $\#Actual\ TX$.
HC also obtains the highest ratio of $0$ and $1$ actual $TX$ per packet, since the system performances sometimes are acceptable even though the PDR is not high enough. Occasionally, HC increases the $\#Actual\ TX$ to $3$ and $4$ in an effort to handle physical attacks.
According to Fig.~\ref{fig:phyattack3} and Fig.~\ref{fig:phyattack4}, the $\#Actual\ TX$ of actuation WSAN nodes and the system lifetime of the WSAN are almost equal with the baseline of fixed 2 $TX$s.
Compared to PN, HC is significantly more efficient, validating our principle of choosing the network configuration based on the performance of the physical plant.}
\begin{figure}[tp]
\centering
  \includegraphics[width=0.5\columnwidth]{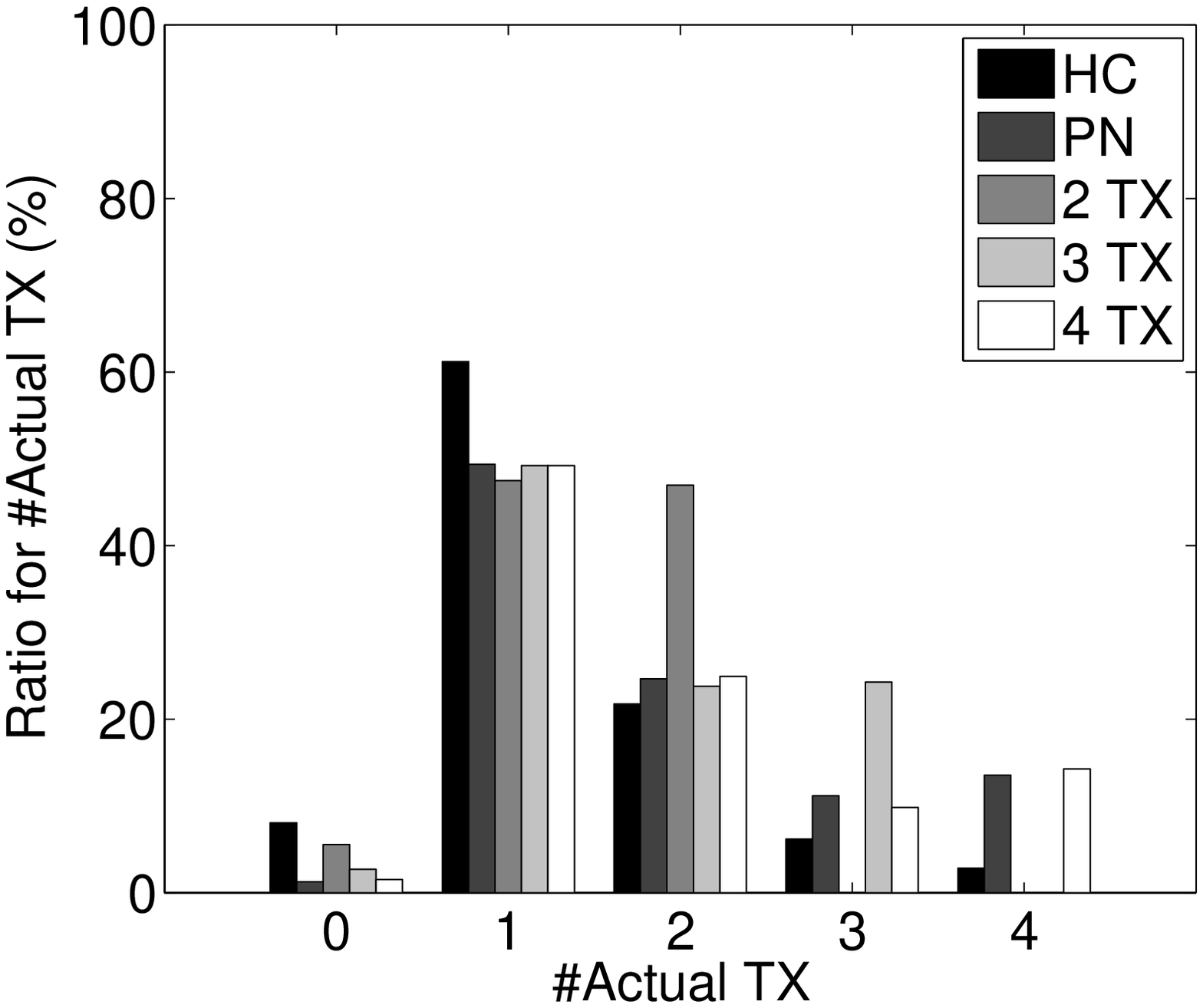}
  \caption{%
Ratio for $\#Actual\ TX$ under sensor bias disturbance
  }
  \label{fig:act_send_phy}
\end{figure}

\begin{figure}[tp]
\centering
\begin{minipage}[b]{0.45\textwidth}
\centering
  \includegraphics[width=1\columnwidth]{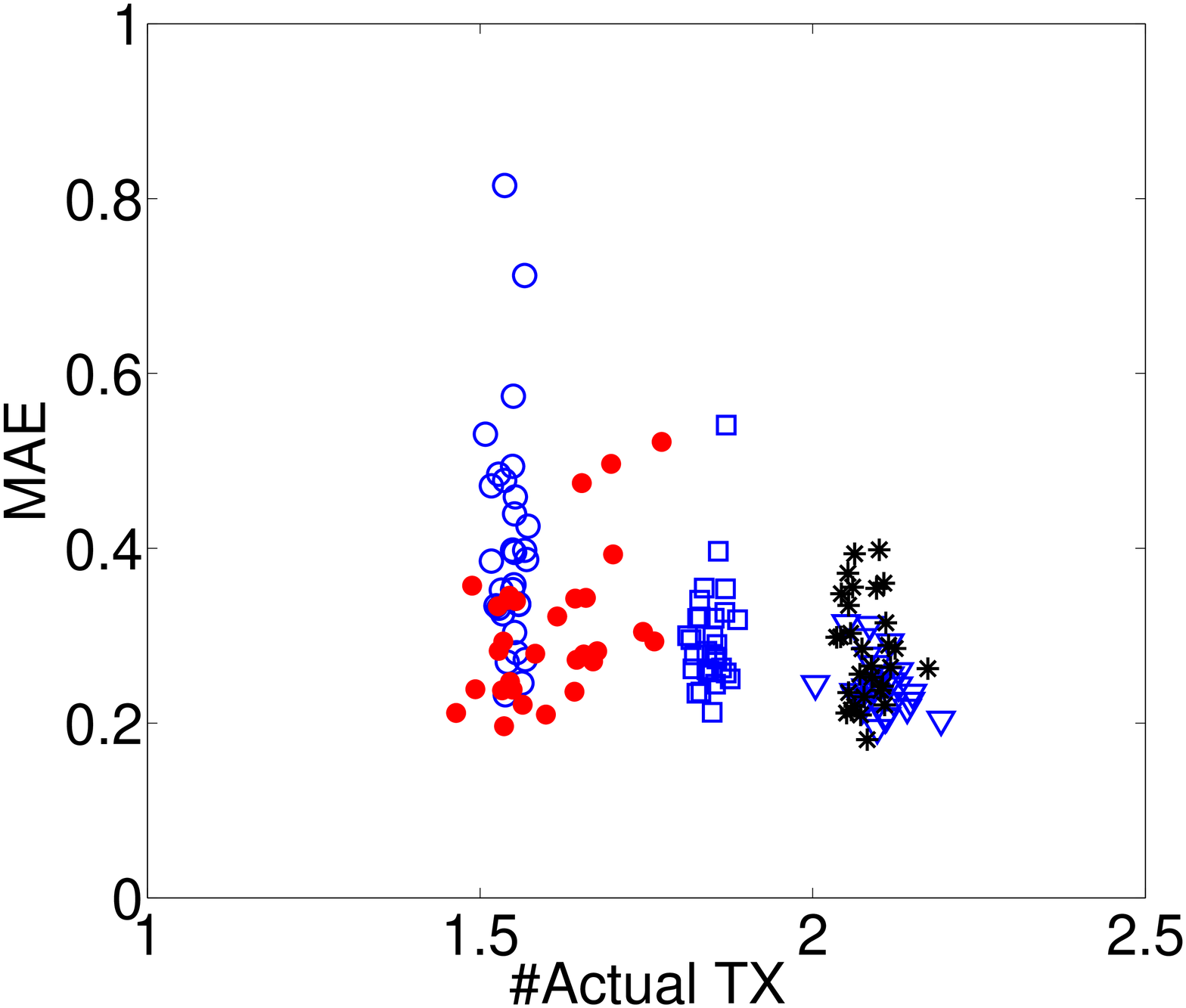}
     \subcaption{Relationship between MAE and $\#Actual\ TX$}
 \label{fig:spread_phy1}
 \end{minipage}
 \begin{minipage}[b]{0.45\textwidth}
\centering
  \includegraphics[width=1\columnwidth]{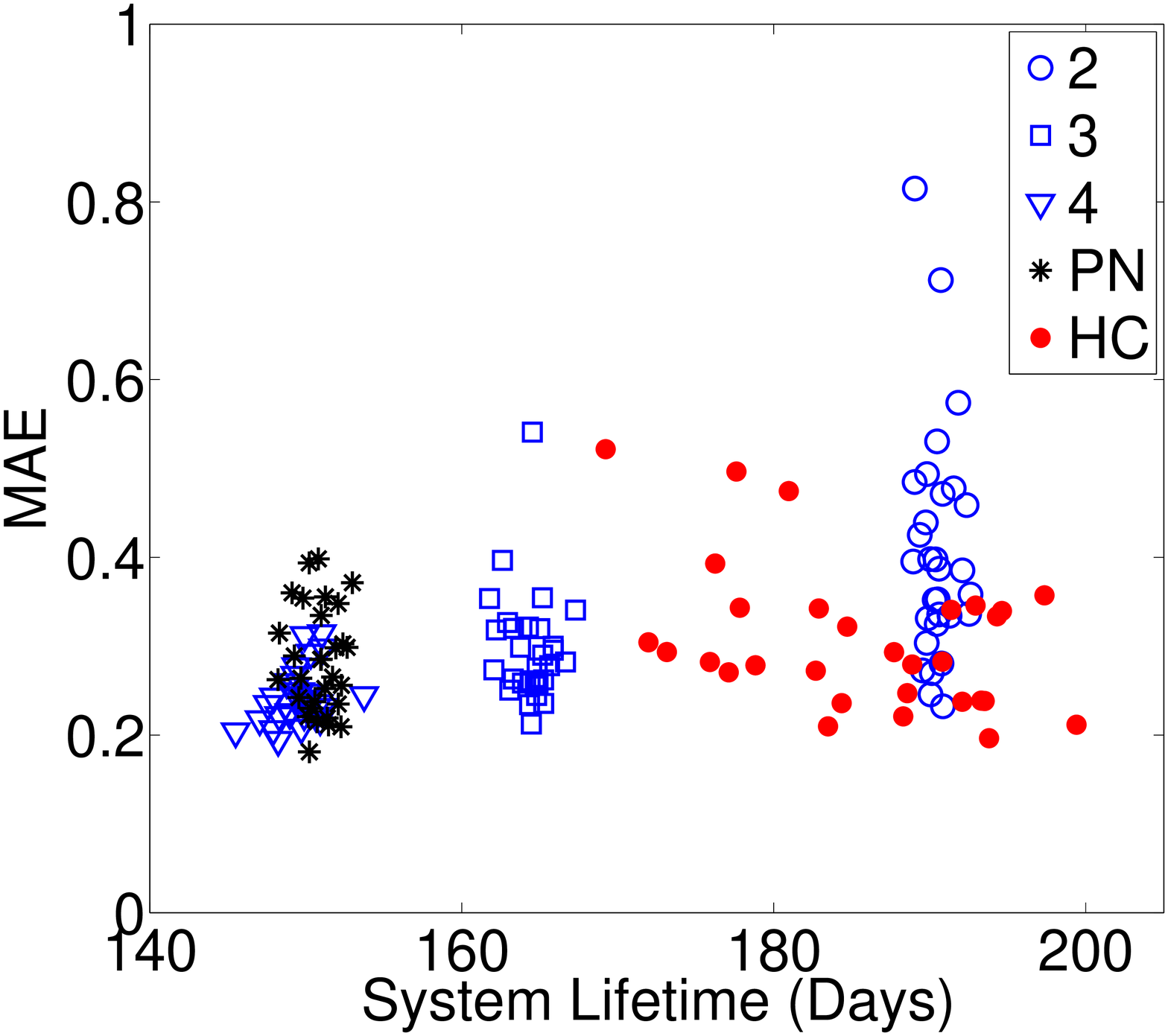}
    \subcaption{Relationship between MAE and battery life}
 \label{fig:spread_phy2}
 \end{minipage}
  \caption{%
Relationship between MAE and $\#Actual\ TX$ as well as system lifetime for different $\#TX$ adaptation algorithms under sensor bias disturbance. The algorithms are HC (red), PN (black) and fixed $\#TX$ (blue), respectively.
  }
  \label{fig:spread_phy}
\end{figure}

\reply{Fig.~\ref{fig:spread_phy} also presents similar results with Fig.~\ref{fig:spread_cy} in which the data points of HC concentrate in the bottom left area of Fig.~\ref{fig:spread_phy1} and the bottom right area of Fig.~\ref{fig:spread_phy2}. Therefore, HC allows a WSAN to be more resilient to both cyber and physical attack by adjusting its network configuration when needed.
It also extends the system lifetime while keeping MAE within reasonable values.}

%\note{Address there can be new scheduling algorithm and address scalability in order to better manage the network resources.}

%Therefore, the ReTX number is reduced to $0$ between $100s$--$300s$.
%At time $300s$, physical disturbances drag states away from the equilibrium point, leading cost function value to exceed its increase threshold.
%ReTX number increases accordingly during the transient process and reduces to $0$ after physical disturbance disappears and physical system comes back to equilibrium point again.

%We apply this physical disturbance case to all comparison methods.
%Again we compare IAE and ReTX of our Holistic Controller (HC) with Pure Network (PN) and fixed number of retransmissions.
%The comparison results are shown in Fig.~\ref{fig:phyattack}. It is worth noting that since the noise level is not as high as the cyber disturbances, states in all simulations converge in the end.
%We can see in Fig.~\ref{fig:phyIAE} that the IAE of Holistic Controller is similar with baseline of $3$ ReTX. While the average ReTX number of it is below $1.5$, as is shown in Fig.~\ref{fig:phyCost}.
%Besides, even though the protocol of pure network adaptation (PN) also get similar IAE as HC, PN concentrates on guarantee the PDR of network regardless the performances of physical system.
%As a result, the average ReTX number of PN is more than 2 times of HC.

\subsection{Multi-loop Control System}
\label{sec:multi}
In this section, we include an open-loop stable plant that shares the WSAN with the plant described in Section.~\ref{sec:setting} to form a multi-loop simulation. For this plant, the set of eigenvalues of A is equal to $\{0.4, 0.6, 0.96+0.02i, 0.96-0.02i, 0.8\}$. Since all eigenvalues are inside the unit circle, the plant is open-loop stable. The control loop of the plant described in Section.~\ref{sec:setting} uses the same sensing and actuation flows as in previous sections. The added control loop of the open-loop stable plant uses another pair of sensing and actuation flows in WSAN. \reply{Fig.~\ref{fig:multiloop_cy} shows the simulation results of added control loop under the same wireless interference as Section.~\ref{sec:eval_cyber}.

As is shown in Fig.~\ref{fig:multiloop_cy}d, during the transient state, the $\#TX$ is adjusted to $2$ for the first $23$ s.
Since the plant is open-loop stable, the Lyapunov function in Fig.~\ref{fig:multiloop_cy}b quickly decreases, and the $\#TX$ remains at $1$ after the transient state ends.
It is worth noting that, for open loop stable plants that stabilize at the equilibrium point, the system can retain perfect performances even when the network drops most of its packets.
As is shown in Fig.~\ref{fig:dlcyattack}, open-loop stable system can achieve smaller MAE with fewer $\#Actual\ TX$, which indicates that our holistic management framework is effective in allocating WSAN resources for the multi-loop system based on characteristics of physical plants .
In this case, it is possible to allocate those network resources to other lower priority applications such as network health reports.
However, the lower priority applications would have to be preempted if the actuation packets claim the network resources.
It is also possible to adopt other real-time scheduling and routing methods to balance the allocation of the network resources to multiple control loops based on their properties such as open-loop stability, time constant and run-time Lyapunov function.
We will address those issues in our future work.}
\begin{figure}[tp]
  \centering
  \includegraphics[width=0.88\columnwidth]{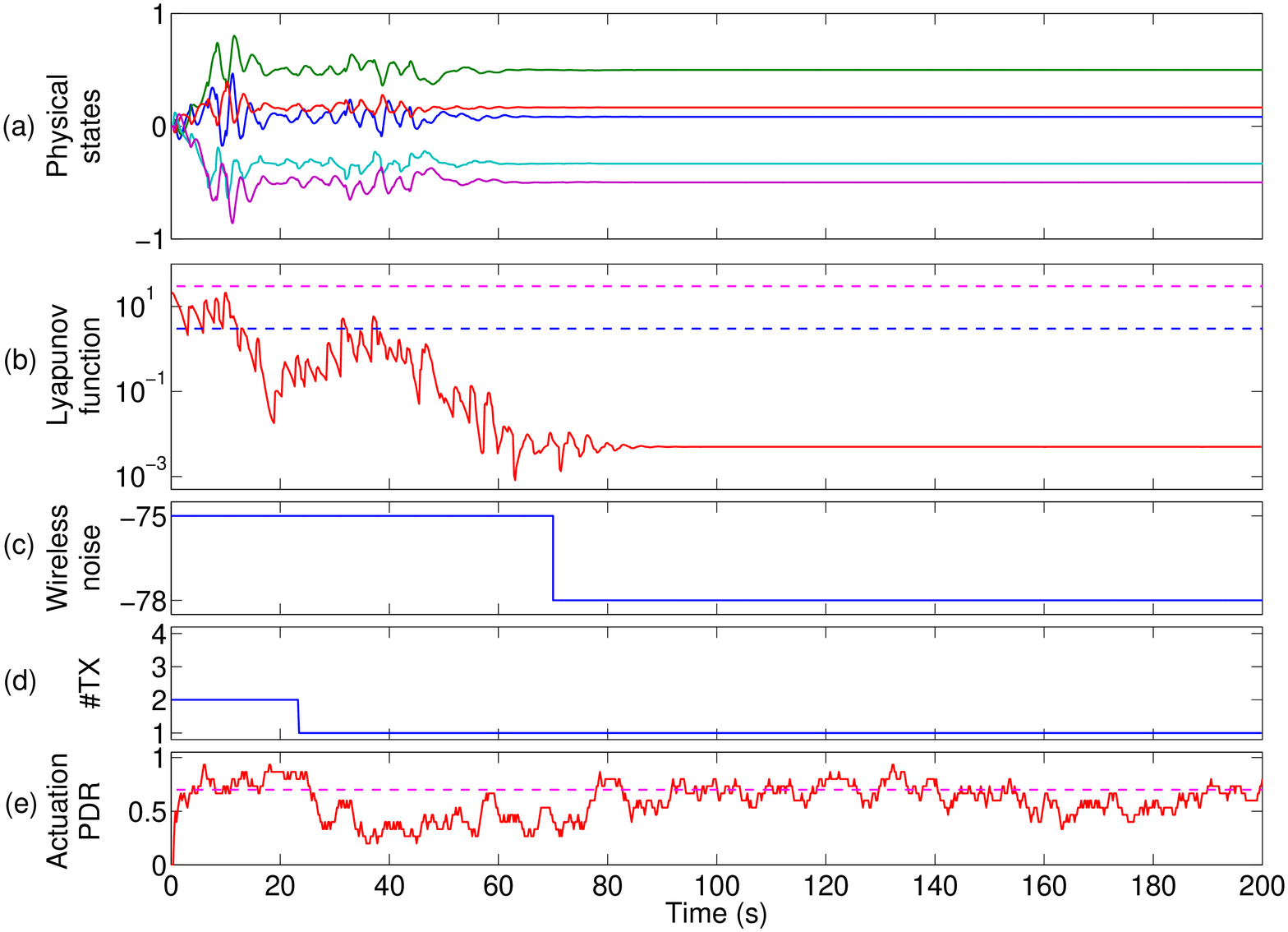}
  \caption{Open-loop stable plant under wireless disturbance}
  \label{fig:multiloop_cy}
\end{figure}

\begin{figure}[tp]
\begin{minipage}[b]{0.4\textwidth}
\centering
  \includegraphics[width=0.83\columnwidth]{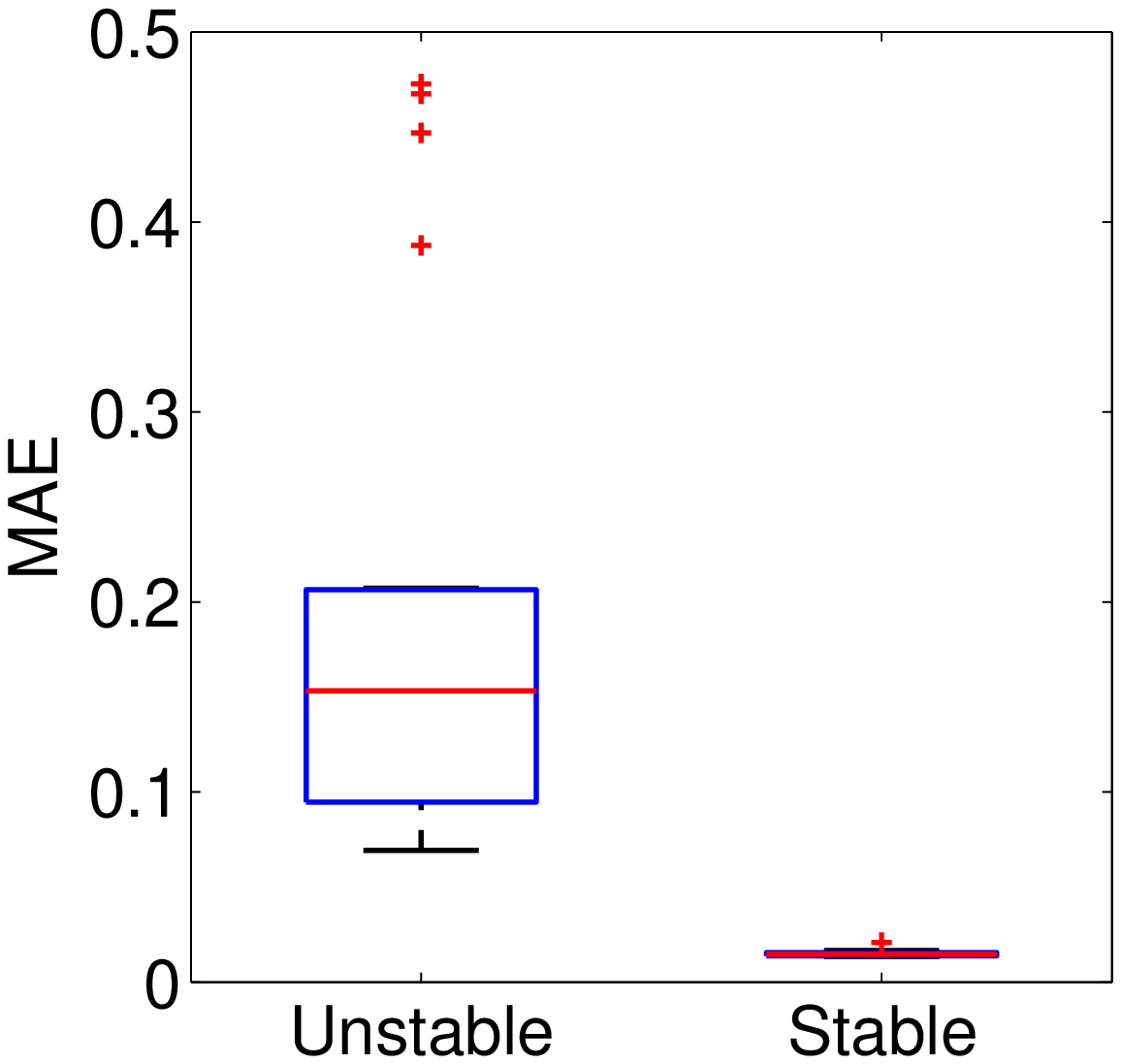}
 \subcaption{Mean Absolute Error (MAE)}
 \label{fig:dlattack1}
 \end{minipage}
 \begin{minipage}[b]{0.4\textwidth}
  \centering
  \includegraphics[width=0.85\columnwidth]{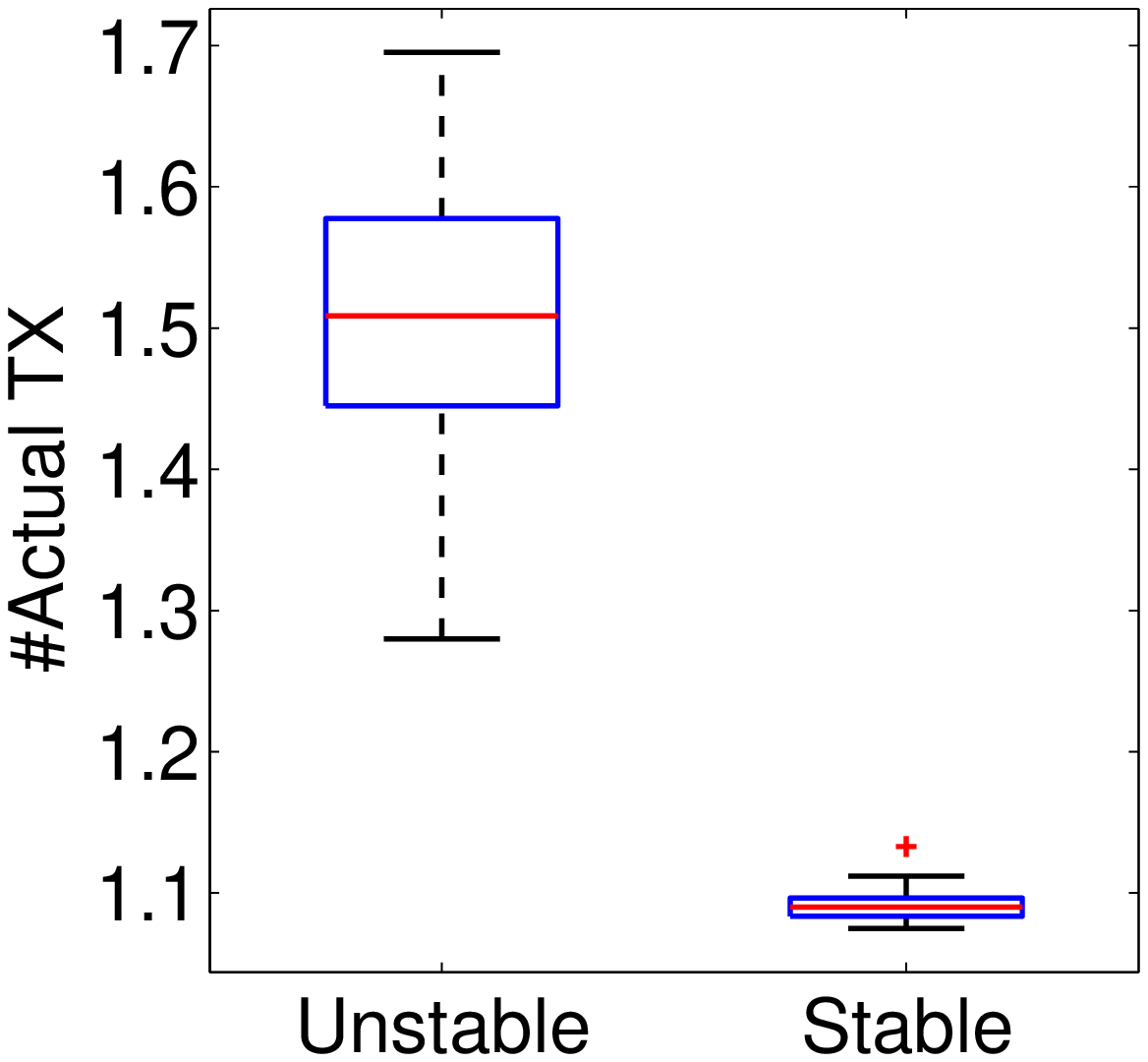}
   \subcaption{Average number of actual TXs per packet}
    \label{fig:dlattack2}
 \end{minipage}\\
   \caption{%
    (a) MAE, (b) $\#Actual\ TX$ of open-loop unstable and open-loop stable plants, respectively, under a wireless interferences. The $\#TX$ adaptation algorithm is HC.
  }
	\label{fig:dlcyattack}
\end{figure}

\section{Conclusions}
\label{sec:conclusion}

In this paper, we have proposed a holistic cyber-physical management framework to enhance the dependability of wireless control systems under both cyber and physical disturbances. The holistic management approach coordinate the physical control and network management mechanisms to safely control the physical plant while efficiently allocating wireless network resources.  We then design a concrete holistic controller that considers the worst-case evolution of the Lyapunov function of the plant under ideal network conditions, together with the run-time packet delivery ratio of the wireless network, and decides the number of transmissions for each wireless flow.  We have implemented the holistic controller and the network management mechanism in the WCPS simulator.  A case study that systematically explores both control and wireless performances has been presented using real-world wireless traces.  Simulation results show that our holistic controller is capable of maintain safe physical operation in the presence of sensor disturbances and significant wireless interference.  These results shed light on a new family of dependable cyber-physical systems that provides dependable control while efficiently allocating wireless network resources.

\appendix

\begin{acks}

This work is supported by NSF through grant 1646579 (CPS) and 1320921 (NeTS), and the Fullgraf Foundation.

\end{acks}

% Bibliography
\bibliographystyle{ACM-Reference-Format}
\bibliography{foo}

\end{document}